\newif\ifdraft
\newif\ifnames
\newcommand{\sasha}[1]{\ifdraft{\color{olive} \;#1 --Sasha\;}\fi}
\newcommand{\ted}[1]{\ifdraft{\color{red} \;#1 --Ted\;}\fi}
\newcommand{\ian}[1]{\ifdraft{\color{blue} \;#1 --Ian\;}\fi}
\newcommand{\unfinished}[1]
{\ifdraft{\color{orange} #1\;}\fi}
\newcommand*\samethanks[1][\value{footnote}]{\footnotemark[#1]}
\title{Collapsing Catalytic Classes}
\author{\ifnames 
        Michal Kouck\'{y}\thanks{Partially supported by the Grant Agency of the Czech Republic under the grant agreement no. 24-10306S and by Charles Univ. project UNCE 24/SCI/008.} \\ Charles University \\ \texttt{koucky@iuuk.mff.cuni.cz} \\ \\
        Edward Pyne\thanks{Supported by the NSF GRFP.} \\ MIT \\ \texttt{epyne@mit.edu}
        \and
        Ian Mertz\samethanks[1] \\ Charles University \\ \texttt{iwmertz@iuuk.mff.cuni.cz}  \\ \\
        Sasha Sami\samethanks[1] \\ Charles University \\ \texttt{sashasami@iuuk.mff.cuni.cz}
        \fi}
\date{\today}
\begin{document}
\begin{titlepage}
\maketitle

\begin{abstract}
    A catalytic machine is a space-bounded Turing machine with additional
    access to a second, much larger work tape, with the caveat that this
    tape is full, and its contents must be preserved by the computation.
    Catalytic machines were defined by Buhrman et al. (STOC 2014), who,
    alongside many follow-up works, exhibited the power of
    catalytic space ($\CSPACE$) and in particular catalytic logspace
    machines ($\CL$) beyond that of traditional space-bounded machines.

    Several variants of $\CL$ have been proposed, including non-deterministic and co-non-deterministic catalytic computation by Buhrman et al. (STACS 2016)
    and randomized catalytic computation by Datta et. al. (CSR 2020). These and other works proposed several questions, such as catalytic analogues of the theorems of Savitch and Immerman and Szelepcs\'{e}nyi. Catalytic computation was recently derandomized by Cook et al. (STOC 2025), but only in certain parameter regimes.

    We settle almost all questions regarding non-deterministic and randomized
    catalytic computation, by giving an optimal reduction from catalytic space with additional resources to the corresponding non-catalytic space classes. With regards to non-determinism, our main result is that
    \[
    \CL = \CNL
    \]
    and with regards to randomness, we show
    \[
    \CL = \CPL
    \]
    where $\CPL$ denotes randomized catalytic logspace where the accepting probability can be arbitrarily close to $1/2$.
    We also have a number of near-optimal partial results for non-deterministic and randomized catalytic computation with less catalytic space. In particular, we show catalytic versions of Savitch's theorem, Immerman-Szelepsc\'{e}nyi, and the derandomization results of Nisan and Saks and Zhou, all of which are unconditional and hold for all parameter settings.
    
    Our results build on the \textit{compress-or-compute framework} of
    Cook et al. (STOC 2025).
    Despite proving broader and stronger results, our framework is
    simpler and more modular.
\end{abstract}

\thispagestyle{empty}
\newpage
\unfinished{
WRITING:

\begin{tabular}{c|c}
   Section  &  Final status good?\\
   Abs  & \sasha{Read} \ian{Read} \\
   Intro & \sasha{Read} \ian{Read} \\
   Prelims & \sasha{Read} \ian{Read} \\ 
   Overview & \sasha{Read} \ian{Read} \\
   Formal Proof & \sasha{Read}
\end{tabular}

FOR FUTURE
\begin{itemize}
    \item Can we get $\NL\subseteq \CUSPACE[O(s),O(s^2)]$? I.e. improve Pyn24 to get deterministic decompression.
\end{itemize}

TODOs
\begin{itemize}
    \item Ian: Discussion section in intro (all hierarchies and UL stuff go here) \ted{Looked over} \ian{More reorganization of the paper but no substantive changes.}
    \item Michal: proof nitpick fixes
    \item Michal: pass on abs/intro/prelims
    \item why are the authors so weirdly misaligned?
\end{itemize}
}

\end{titlepage}


\section{Introduction}
\label{sec:intro}

In this paper we study space-bounded classes with access to additional
resources. The three resources we study are \emphdef{non-determinism},
\emphdef{randomness}, and, most central to this work, \emphdef{catalytic space}.

The first two models, i.e. non-deterministic and randomized space,
have long yet unresolved histories going back to the earliest days
of theoretical computer science.
The first major result, due to Savitch in
1970~\cite{Savitch70}, states that determinism can simulate
non-determinism with only a quadratic space overhead, i.e.
$\NSPACE[s] \subseteq \SPACE[s^2]$.
Much later, Nisan~\cite{Nisan92} and Saks and Zhou~\cite{SaksZhou99}
proved that bounded-error randomness can similarly be simulated with
an even smaller blowup, showing $\BPSPACE[s] \subseteq \SPACE[s^{3/2}]$.
In between, Borodin, Cook, and Pippenger~\cite{BorodinCookPippenger83}
derandomized \textit{unbounded-error} randomized computation with a quadratic space blowup, i.e. $\PrSPACE[s] \subseteq \SPACE[s^2]$.

While derandomization in the bounded-error case continues to see vigorous
work---the exponent
has since been further improved by Hoza~\cite{Hoza21} by a $o(1)$
factor---Savitch's Theorem remains the best known simulation for
non-determinism to date. Subsequent results by Immerman and
Szelepsc\'{e}nyi~\cite{Immerman88,Szelepcs_nyi_1988}
and Reinhardt and Allender~\cite{ReinhardtAllender00} show progress
from a different angle, namely by showing that $\NL$ is closed under
complement ($\NL = \coNL$), and $\NL$
can be made unambiguous ($\NL = \UL$) assuming strong circuit lower bounds,
respectively. Meanwhile, while \cite{BorodinCookPippenger83} shows
that $\PL=\PrSPACE[O(\log n)] \subseteq \TCo$, a slightly stronger containment than $\Logspace^2$,
there have been no other improvements
on unbounded-error derandomization.

\subsection{Catalytic Computation}
We now turn our attention to our third resource.
The \emphdef{catalytic space} model, introduced by Buhrman, Cleve, Kouck{\'{y}}, Loff,
and Speelman~\cite{BuhrmanCleveKouckyLoffSpeelman14},
studies the question of whether full space can be useful to computation.
In $\CSPACE[s,c]$ we consider a typical $\SPACE[s]$ machine augmented
with a second work tape, called the \emphdef{catalytic tape}, which has length $c$.
We think of $c$ to be much larger than $s$, often exponentially larger;
however, this tape is initialized to some arbitrary string $\tau$, and at the end
of the computation our machine must reset the catalytic tape to the starting $\tau$.

Despite this restriction, \cite{BuhrmanCleveKouckyLoffSpeelman14} show that such machines
are unexpectedly powerful. Focusing on the class called \emphdef{catalytic logspace}
($\CL := \CSPACE[O(\log n),\poly(n)]$), they show that the catalytic tape
is at least as powerful as randomization and non-determinism
($\BPL$ and $\NL$, respectively), and contains problems (e.g. determinant)
which are thought to be in neither.

Catalytic computation appeared in the context of \emphdef{composition} for
space-bounded functions, where it was unknown whether computing multiple instances
of a function causes the space complexity to scale up linearly in tandem.
Such techniques and insights were crucially used in the recent result of
Cook and Mertz~\cite{CookMertz21,CookMertz22,CookMertz24} on the tree evaluation problem,
which was later used in a breakthrough by Williams~\cite{Williams25} showing
$\TIME[t]$ can be simulated in only $\sqrt{t \log t}$ space.

\subsection{Non-Deterministic and Randomized Catalytic Computation}

In light of the surprising power of catalytic space,
follow-up works have proposed several variants of the base model,
such as non-deterministic catalytic
computation~\cite{BuhrmanKouckyLoffSpeelman18,GuptaJainSharmaTewari19,Sdroievski24},
randomized catalytic computation~\cite{DattaGuptaJainSharmaTewari20,CookLiMertzPyne24},
non-uniform catalytic computation~\cite{GirardKouckyMckenzie15,Potechin17,RobereZuiddam21,CookMertz22}, and
error-prone catalytic computation~\cite{GuptaJainSharmaTewari24,FolkertsmaMertzSpeelmanTupker25},
to name a few
(see surveys of Kouck\'{y}~\cite{Koucky16} and Mertz~\cite{Mertz23} for an overview).

Non-deterministic catalytic space was introduced by
Buhrman et al.~\cite{BuhrmanKouckyLoffSpeelman18}, who showed that,
assuming pseudorandom generators, \emphdef{catalytic non-deterministic
logspace} ($\CNL$) is also closed under complement ($\CNL=\coCNL$).
Later work of~\cite{GuptaJainSharmaTewari19} extended this by showing,
again assuming pseudorandom generators, that $\CNL$ can also be made
unambiguous ($\CNL = \CUL$).
The question of proving such statements unambiguously, or of obtaining
a catalytic analogue of Savitch's Theorem,
was put forth several times as an open question~\cite{BuhrmanKouckyLoffSpeelman18,GuptaJainSharmaTewari19,Koucky16,Mertz23,CookLiMertzPyne24}.

Randomized catalytic space was introduced by Datta et al.~\cite{DattaGuptaJainSharmaTewari20}, where they showed
that \emphdef{catalytic randomized logspace}
($\CBPL$) equals $\CL$ under similar pseudorandomness assumptions.
This was recently shown unconditionally by
Cook et al.~\cite{CookLiMertzPyne24}; however, their result only holds for
$\CSPACE[s,c]$ when $c = 2^{\Theta(s)}$,
while the general case is of interest in other settings
\cite{BisoyiDineshSarma22,Pyne24,FolkertsmaMertzSpeelmanTupker25}.
Thus far there has been no study of \textit{unbounded} error randomness in
catalytic computation ($\CPL$).\footnote{Terminology for unbounded error
randomized classes varies between using Pr and simply P, but as we will
be introducing e.g. $\CPSPACE$ the latter creates too many collisions
between ``probabilistic'' and ``polynomial''.}

\subsection{Our Results}
We settle essentially every question regarding non-deterministic and randomized
catalytic computation.
Our most striking result is that, with access to a large pre-filled hard drive,
neither non-determinism nor unbounded-error randomness gives any additional power:
\begin{theorem}~\label{cor:collapse-cnl}
    $$\CL = \CNL.$$
\end{theorem}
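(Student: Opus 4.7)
The plan is to obtain Theorem~\ref{cor:collapse-cnl} as a corollary of the paper's companion main theorem $\CL = \CPL$. Since $\CL \subseteq \CNL$ is immediate, it suffices to establish $\CNL \subseteq \CL$, so essentially all of the real work is shunted into the unbounded-error derandomization.

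The first step is a catalytic analogue of the classical inclusion $\NL \subseteq \PL$: namely, $\CNL \subseteq \CPL$. Given a $\CNL$ machine $M$, I would define a $\CPL$ machine $M'$ that reads its random bits as a guessed sequence of non-deterministic choices for $M$ and then faithfully simulates $M$ along this sampled path, accepting iff the simulation accepts. On input $x$ with catalytic tape $\tau$: if $M$ has no accepting path, then $\Pr[M' \text{ accepts}] = 0$; if $M$ has at least one accepting path, then $\Pr[M' \text{ accepts}] \ge 2^{-p(n)}$ for some polynomial $p$ bounding the guess length. A standard biased-coin padding converts this $\{0\}$ vs.\ $[2^{-p(n)}, 1]$ dichotomy into the $\le 1/2$ vs.\ $> 1/2$ dichotomy required by unbounded-error acceptance.

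The one genuine obligation is that $M'$ must restore $\tau$ on every random outcome, not merely on the accepting ones. This is automatic because a well-defined $\CNL$ machine is required to restore its catalytic tape on every non-deterministic branch, accepting or rejecting; by simulating $M$'s branches bit-for-bit, $M'$ inherits tape restoration on every random string. No compress-or-compute machinery is needed for this reduction.

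Chaining $\CNL \subseteq \CPL$ with the companion result $\CPL = \CL$ then yields $\CNL \subseteq \CL$, completing the proof. The main obstacle therefore lies entirely outside of this short derivation: it is the containment $\CPL \subseteq \CL$, which requires the paper's streamlined version of the compress-or-compute framework and, I expect, constitutes the principal technical content of the work, with the $\CNL$ collapse falling out essentially for free once unbounded-error randomness has been derandomized.
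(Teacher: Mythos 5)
The overall strategy---route $\CNL$ through $\CPL$ and invoke $\CPL=\CL$---is a plausible shortcut, and the containment $\CNL\subseteq\CPL$ is indeed \emph{true} (it follows from the paper's own collapses, since $\CNL=\CL\subseteq\CPL$). The problem is that the specific proof you give of $\CNL\subseteq\CPL$ does not go through, and the failure is exactly at the point you treated as routine.

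You assert that the number of non-deterministic choices of a $\CNL$ machine is bounded by a polynomial $p(n)$. This is false. A valid catalytic machine with free space $s=O(\log n)$ and catalytic space $c=\poly(n)$ may run for up to $2^{O(s+c)}=2^{\poly(n)}$ steps (the only a priori bound is the number of \emph{full} configurations, including the catalytic tape), and so a $\CNL$ machine may consume $2^{\poly(n)}$ non-deterministic bits. Consequently the acceptance gap your simulation produces is $\{0\}$ vs.\ $[2^{-2^{\poly(n)}},1]$, not $\{0\}$ vs.\ $[2^{-\poly(n)},1]$. The ``standard biased-coin padding'' you invoke then needs to synthesize an acceptance probability of the form $1-2^{-T}$ for $T=2^{\poly(n)}$, which means flipping $T$ coins and detecting whether any of them is $1$. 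But detecting the end of that phase requires counting to $T$, i.e., a $\Theta(\log T)=\Theta(\poly(n))$-bit counter, and the $\CPL$ machine has only $O(\log n)$ free work space. Storing the counter on the catalytic tape is not a drop-in fix either, because its initial contents are adversarial and must be restored, and the obvious XOR/increment tricks require already knowing where to stop. In short, the classical ``$\NL\subseteq\PL$ by direct simulation'' argument crucially exploits that $\NL$ machines run in polynomial time, and that hypothesis silently fails for $\CNL$.

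By contrast, the paper never attempts a direct $\CNL\to\CPL$ reduction. Both $\CL=\CNL$ and $\CL=\CPL$ are independent instantiations of the single technical theorem (the compress-or-compute oracle reduction, Theorem~\ref{thm:main-tech}), which reduces a $\CBSPACE[s,c]$ machine to a question about a configuration graph of size $2^{O(s)}$; the $\CNL$ case is then finished by $\NSPACE[O(s)]\subseteq\CSPACE[O(s),2^{O(s)}]$ (Theorem~\ref{thm:NLinCL}), and the $\CPL$ case by the analogous $\PrSPACE$ containment. The exponential-time issue is handled inside the technical theorem by reversible tree exploration from the halting configurations and by compressing the catalytic tape when a tree is too large---precisely the machinery your proposal claims is ``not needed for this reduction.'' To salvage your route you would need an independent proof of $\CNL\subseteq\CPL$ that does not rely on a polynomial bound on running time, and no such argument is in sight short of something with the same strength as the paper's reduction.
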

\begin{theorem}~\label{cor:collapse-cpl}
    $$\CL = \CPL.$$
\end{theorem}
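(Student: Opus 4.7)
The strategy is to reduce the $\CPL$ decision to the sign of a determinant, which is already in $\CL$. The classical Borodin-Cook-Pippenger characterization captures $\PL$ as the sign of the determinant of a polynomial-size integer matrix, and determinant lies in $\CL$ by \cite{BuhrmanCleveKouckyLoffSpeelman14}, so $\PL \subseteq \CL$. The plan is to mirror this derandomization for $\CPL$.

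Given a $\CPL$ machine $M$ with input $x$ and catalytic tape $\tau$, I would construct an integer matrix $B_{x,\tau}$ whose sign of determinant encodes $M$'s decision bit, aiming for polynomial dimension and logspace-computable entries from $(i,j,x,\tau)$. The subtlety is that the naive configuration-graph construction produces a $2^{\poly(n)}$-dimensional matrix, since the catalytic state contributes $\poly(n)$ bits per configuration. The key technical insight must be to exploit the catalytic restoration constraint---that $M$ returns $\tau$ to its starting value on every computation path---to extract a polynomial-dimensional ``quotient'' matrix supported only on $M$'s $O(\log n)$-bit non-catalytic state, with entries given by a logspace procedure that aggregates over catalytic bookkeeping given access to $\tau$.

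The remaining step is to run the $\CL$ determinant algorithm on this implicitly-defined matrix. The algorithm expects its matrix read-only, yet here entries depend on $\tau$, the very tape the algorithm modifies as its workspace. This is where I would invoke compress-or-compute: throughout the determinant computation, maintain a compact ``delta'' tracking the workspace's deviation from the original $\tau$; whenever an entry query arises, invert the delta on the fly to expose the requested bits of $\tau$, then reapply. The framework guarantees the delta stays succinct at every moment, either via explicit compression when it grows too large or via local bookkeeping otherwise.

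The main obstacle is the absence of slack for approximation. Unlike bounded-error derandomization \cite{CookLiMertzPyne24}, where PRG approximation suffices, the acceptance gap in $\CPL$ may be as small as $2^{-\poly(n)}$, so every integer operation in the determinant must be exact. Compress-or-compute was originally built around PRG-based simulation of $\CBPL$, where approximation was tolerated; adapting it to carry exact integer arithmetic for a determinant computation is the heart of the proof. I would aim for this by cleanly separating the combinatorial delta-management layer (a catalytic bookkeeping primitive) from the algebraic determinant layer, so that the approximation-free algebraic computation proceeds oblivious to the catalytic subtleties, matching the paper's stated modularization of the framework.
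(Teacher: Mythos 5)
Your proposal takes a determinant-centric route that diverges sharply from the paper's argument, and there is a genuine gap at the center of it. You observe, correctly, that the naive configuration graph of a $\CPL$ machine has $2^{\poly(n)}$ nodes, and you propose to fix this by passing to a ``polynomial-dimensional quotient matrix supported only on $M$'s $O(\log n)$-bit non-catalytic state,'' with entries obtained by ``aggregating over catalytic bookkeeping.'' This quotient does not exist in any usable sense. During the run, the catalytic tape is free to wander away from $\tau$ arbitrarily (the restoration constraint only binds at halting), and the transition relation, hence both reachability and exact path counts, depends essentially on the full catalytic content, not just the $O(\log n)$ bits of non-catalytic state. Two configurations sharing a work-tape state but differing on the catalytic tape can have entirely different futures, so collapsing them destroys precisely the information the determinant is supposed to encode. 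No locally-computable aggregation over $\tau$ recovers this. This is not a bookkeeping subtlety to be patched by a delta-tracking layer --- it is the core obstruction, and your sketch never confronts it.

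The paper solves the size problem by a different mechanism entirely: rather than trying to shrink the configuration graph of a fixed $\tau$ algebraically, it changes $\tau$. Specifically, it explores the $0$-graphs rooted at the two halting configurations $\accept$ and $\reject$; these are trees, their vertex sets over distinct $\tau$ are disjoint, so their average size over $\tau \in \{0,1\}^c$ is at most $2^s$ (\Cref{prop:sizeexpectation}), and every configuration reachable from $\start$ lies in one of them (\Cref{lem:reachInDFS}). A compress-or-compute loop (\Cref{lemma:compressorcompute}) either finds that both trees are already of size $2^{O(s)}$ --- in which case the induced graph is small enough to hand to a $\PrSPACE[O(s)]$ oracle, which is then instantiated as a $\CSPACE[O(s),2^{O(s)}]$ machine on a polynomial-size scratch region via \Cref{thm:PLinCL} --- or it compresses the catalytic tape by $\Omega(s)$ bits; after $O(c/s)$ iterations the freed space lets it brute-force search for a good $\tau$, which must exist by the averaging bound. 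Nothing in this argument is approximate, so your final paragraph about the missing ``slack for approximation'' attacks a difficulty the paper's method never encounters; the exact-arithmetic worry is an artifact of your chosen route, not of the problem. Your use of compress-or-compute (as a live delta-tracker underneath a running determinant algorithm) is also not the role it plays in the paper (as a one-shot mechanism to free space and locate a good $\tau$), and there is no indication that your usage terminates or stays within the space budget.
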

\noindent
From these results we conclude in the context of space-bounded computation,
catalytic space acts as one resource to rule them all.
We note that before our result there were no non-trivial connections between deterministic
catalytic computation and either non-deterministic or unbounded-error randomized
catalytic computation, even under assumptions.

We also have a number of results for other values of $s$ and $c$. 
First, a catalytic Savitch's Theorem holds, with overhead matching that of the non-catalytic case:
\begin{theorem}\label{cor:cat-savitch}
    For all $s := s(n)$, $c := c(n)$ such that $\log n \leq s \leq c \leq 2^s$,
    $$\CNSPACE[s,c] \subseteq \CSPACE[O(s^2),O(c)].$$
\end{theorem}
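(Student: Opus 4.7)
The plan is to adapt Savitch's divide-and-conquer reachability to the catalytic setting, using the compress-or-compute framework of Cook et al.\ as the main tool. I would first fix a $\CNSPACE[s,c]$ machine $M$, an input $x$, and initial catalytic contents $\tau$, and reduce acceptance to $st$-reachability in $M$'s configuration graph $G$. A node of $G$ is a pair $(u,\sigma)$ of a work configuration and a catalytic tape state, so $G$ has $2^{O(s+c)}$ nodes, and $M$ accepts iff there is a path from $(u_{\text{start}},\tau)$ to $(u_{\text{accept}},\tau)$.

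Next I would run Savitch's recursion $\text{Reach}(C,C',k)$: ``is there a length-$k$ path in $G$ from $C$ to $C'$?'' The recursion has depth $O(s+c)$, and each frame stores a candidate midpoint $C$ of $O(s+c)$ bits, so a naive implementation costs $O((s+c)^2)$ space. Our budget is $O(s^2)$ work and $O(c)$ catalytic, totaling $O(s^2+c)$, which is too tight. To fit, I would split each midpoint into its $O(s)$-bit work part and $O(c)$-bit catalytic part, storing work parts on the work tape and catalytic parts as reversible XOR-modifications on the catalytic tape: pushing a frame XORs its catalytic part into designated positions, and popping XORs it out, preserving $\tau$ across balanced push/pop pairs.

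The compress-or-compute framework is the crucial ingredient for keeping catalytic-tape usage within $O(c)$. Its key output in our setting is that the catalytic parts of the $O(s+c)$ currently-active stack frames need not be stored independently: consecutive frames correspond to midpoints whose catalytic states are related by $M$'s transitions over a segment of the computation, so we can encode the whole stack of catalytic parts as incremental ``deltas'' using only $O(c)$ catalytic bits plus $O(s)$ bookkeeping bits per level on the work tape. At a base case, where a single transition of $M$ must be verified, we reconstruct the full catalytic state of the current configuration from the cumulative XOR-encoding on the catalytic tape together with the work-tape stack, and this reconstruction is itself in-place and reversible.

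The main obstacle I expect is designing the encoding so that it simultaneously supports (i) deterministic enumeration over the $2^{s+c}$ candidate midpoints at each Savitch level, (ii) verification of base-case transitions, and (iii) global reversibility of all catalytic-tape modifications, so that every enumeration step leaves the catalytic tape in a state that can be exactly restored once the enumeration completes. The compress-or-compute framework (originally developed by Cook et al.\ for randomized catalytic computation) gives the right abstraction, and the main technical step is extending it to a nondeterministic Savitch recursion of the full $O(s+c)$ depth. Once this encoding is in place, the final space accounting yields the claimed $O(s^2)$ work and $O(c)$ catalytic.
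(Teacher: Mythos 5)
Your proposal runs Savitch's recursion directly on the full configuration graph, which has $2^{\Theta(s+c)}$ nodes, and tries to pay for the $\Theta(s+c)$-deep stack of $\Theta(s+c)$-bit midpoints by XOR-ing catalytic parts onto the catalytic tape and compressing the rest as ``deltas.'' This does not work, and compress-or-compute does not provide what you are attributing to it. The candidate midpoints at each Savitch level are enumerated blindly; at the time a frame is pushed, its catalytic part is an arbitrary $c$-bit string chosen by the enumeration loop, with no established relation to the catalytic parts of the neighboring frames. There is no sense in which a stack of $\Theta(s+c)$ independent, arbitrary $c$-bit strings can be encoded in $O(c)$ catalytic bits plus $O(s)$ work bits per level: even if consecutive midpoints eventually lie on a single valid path, the ``delta'' between two catalytic states separated by a segment of $M$'s run can itself be $\Omega(c)$ bits, and more fundamentally, reversibly storing $k$ arbitrary $c$-bit strings in $O(c)$ catalytic bits is information-theoretically impossible for $k>O(1)$. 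The resulting space cost is really $\Theta((s+c)^2)$, which is far larger than the target $O(s^2 + c)$ whenever $c \gg s$.

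The paper's route avoids ever running Savitch on the big graph. Its main technical theorem shows $\CNSPACE[s,c] \subseteq \CSPACE[O(s),O(c)]^{\NSPACE[O(s)]}$, where the oracle is queried on a graph of size only $2^{O(s)}$: by following the reverse (backward) trees $\cGMz(\accept)$ and $\cGMz(\reject)$, either both trees have at most $2^{O(s)}$ nodes (so the reachable part of $\cGM$ can be written down as a small graph and the oracle decides reachability on it), or one tree is large and its structure is used to compress $\tau$ on the catalytic tape; iterating frees enough catalytic space to brute-force over alternative $\tau'$ until a good one is found, and reversibility is handled by walking back and counting steps, not by XOR deltas. Savitch's theorem is then applied only to replace the $\NSPACE[O(s)]$ oracle on $2^{O(s)}$-size inputs with $\SPACE[O(s^2)]$, giving exactly $\CSPACE[O(s^2),O(c)]$. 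The crucial move you are missing is that the compress-or-compute argument shrinks the \emph{graph} to size $2^{O(s)}$ before any Savitch-style recursion happens, rather than trying to squeeze a full-sized Savitch stack onto the catalytic tape.
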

\noindent
No results of the above form were known before, even under assumptions.

Second, $\CNSPACE[s,c]$ is closed under complement:
\begin{theorem}\label{cor:cat-IS}
    For all $s := s(n)$, $c := c(n)$ such that $\log n \leq s \leq c \leq 2^s$,
    $$\coCNSPACE[s,c] \subseteq \CNSPACE[O(s),O(c)].$$
\end{theorem}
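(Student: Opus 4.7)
The plan is to implement Immerman--Szelepcs\'{e}nyi inductive counting inside a catalytic non-deterministic machine, using the catalytic tape to accommodate the $O(s+c)$-bit counters that IS needs but that do not fit in $O(s)$ workspace alone.

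Let $M$ be the $\CNSPACE[s,c]$ machine deciding $\bar L$, and let $H_x$ denote its configuration graph on input $x$; a node of $H_x$ records the state, work tape, catalytic tape contents, and head positions of $M$, so $|H_x| = N = 2^{O(s+c)}$. By the promise semantics of $\CNSPACE$, $x \in L$ iff for every initial catalytic content $\tau \in \{0,1\}^c$ there is no path in $H_x$ from the initial configuration $I_\tau$ to any accepting configuration whose catalytic contents equal $\tau$. Classical IS decides this coreachability question in $\NSPACE[\log N]$, and since $c \le 2^s$ we have $\log N = O(c)$, so $\CNSPACE[O(s),O(c)]$ has exactly the right total space budget---provided we can use the catalytic tape as scratch while restoring it.

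I would implement the simulator $M'$ by storing IS's multi-bit working data---the counters $R_i$ for nodes reachable in $\le i$ steps, the ``currently visited'' node during enumeration, and any certificate bits---on the catalytic tape using the register-program encoding of Buhrman et al.~\cite{BuhrmanCleveKouckyLoffSpeelman14}: partition the tape into a constant number of $O(c)$-bit registers, and encode each quantity $v$ as $\sigma_{\mathrm{reg}} + v \pmod{2^{\Theta(c)}}$. Counter updates, one-step simulation of a transition of $M$ on the stored configuration, and comparisons are arithmetic/table-lookup operations that become reversible register updates in $O(s)$ workspace, and IS's outer ``guess-and-verify'' loop maps directly onto the non-deterministic branching of $M'$. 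Wrong non-deterministic guesses are handled trivially: they lead to a rejecting branch, and the promise semantics of $\CNSPACE$ only constrain the accept-and-restore event, so we need not clean up the catalytic tape on rejecting paths.

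The sharpest obstacle will be restoring $\sigma$ at the end of the accepting branch, since the stagewise computation of $R_0, R_1, \dots, R_N$ is not intrinsically reversible once only the final count is kept. Following the compress-or-compute organization of Cook et al.~\cite{CookLiMertzPyne24}, I would wrap each inductive-counting stage as a local ``do/undo'' block: first non-deterministically produce and verify $R_i$ from $R_{i-1}$, use $R_i$ to either update a running total or to gate the next stage, and then re-execute the same guess-and-verify subroutine in reverse to undo the transient catalytic writes of the stage. This reduces catalytic preservation to a stage-local invariant; a final global unwind then reverses the running counter and any permanent data stored during the IS pass, leaving $\sigma$ bit-for-bit intact before $M'$ enters its accepting state.
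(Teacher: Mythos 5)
Your first concrete gap is a misreading of the model: you assert that ``the promise semantics of $\CNSPACE$ only constrain the accept-and-restore event, so we need not clean up the catalytic tape on rejecting paths.'' That is false. The paper's validity condition requires that \emph{for every} $x$, $\tau$, and \emph{every} sequence of non-deterministic choices, the machine halts with the catalytic tape restored to $\tau$. Inductive counting has many rejecting branches (every wrong guess of a reachability certificate or of $R_i$), and on your plan those branches have already overwritten portions of the catalytic tape with transient data. If you are not obligated to revert them, the machine is simply not a $\CNSPACE$ machine. So the cleanup problem you wave off is exactly the crux, and it is at its hardest on rejecting branches, not its easiest.

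The second gap is that neither ingredient you invoke actually supplies the reversibility you need. The register-program encoding of Buhrman et al.\ stores a value $v$ as $\sigma + v$ and cancels it at the end, but this machinery is built for \emph{algebraic, oblivious} computations (e.g.\ iterated matrix product / $\TCo$-style programs), where every instruction is an additive update and the control flow does not depend on the stored contents. Inductive counting is the opposite: it branches on comparisons between a guessed count and a running tally, iterates over configurations whose identity lives in the registers, and nondeterministically aborts. There is no canonical ``reverse'' of a nondeterministic guess-and-verify stage, and your ``do/undo block'' is not specified at a level where one could check that the undo really depends only on the catalytic contents and not on which nondeterministic branch was taken. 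Absent that, the argument does not go through. (Note also that your claim would, in effect, show that one can run an $\Omega(c)$-space computation while restoring an $O(c)$-bit full tape with only $O(s)$ free space, which is much stronger than anything known, and is not what the BCKLS technique gives.)

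The paper takes an essentially opposite route and this is why it works: it never runs inductive counting on the $2^{O(s+c)}$-node configuration graph of $\cM$. Instead, a deterministic compress-or-compute procedure (Lemma~\ref{lemma:compressorcompute}) either frees enough catalytic space, or produces an explicit graph $G$ of size only $2^{O(s)}$ that is isomorphic to the forward-reachable part $\cGMt$. All of the ``IS work'' is then done on $G$, which fits in $O(s)$ space and never touches the catalytic tape, so nondeterminism there cannot corrupt anything; catalytic preservation is handled entirely by the counting-steps-back decompression, which is deterministic and independent of the oracle's nondeterministic choices. If you want to salvage your direction, you would have to give a genuinely reversible implementation of each inductive-counting stage on the catalytic tape, with the undo determined by tape contents alone; I do not see how to do that, and the paper's reduction to a small graph is precisely the device that makes it unnecessary.
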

\noindent
This result was previously known to follow from strong lower bounds~\cite{BuhrmanKouckyLoffSpeelman18}. 

Third, $\CBPSPACE[s,c]$ collapses to $\CSPACE[O(s),O(c)]$ for every $c\ge s^2$,
as well as tradeoffs for all other $c$:
\begin{theorem}\label{cor:cat-derand}
    For all $s := s(n)$, $c := c(n)$ such that $\log n \leq s \leq c \leq 2^s$
    and all $\alpha \in [0,1/2]$,
    $$\CBPSPACE[s,c] \subseteq \CSPACE[O(s^{1+\alpha}),O(c+s^{2-\alpha})].$$
\end{theorem}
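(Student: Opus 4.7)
The plan is to parameterize a Saks--Zhou style iterated derandomization and feed it through the compress-or-compute framework of \cite{CookLiMertzPyne24}, using $\alpha$ to control how deeply the Saks--Zhou recursion is unrolled. Intuitively, more recursion layers halve work space per layer but force us to store more Saks--Zhou offsets on the catalytic tape, while fewer layers leave more residual randomness that must be enumerated by brute force in work space. The endpoint $\alpha = 1/2$ should essentially recover the $s^{3/2}$ work-space bound of Saks--Zhou with almost no catalytic blowup, while $\alpha = 0$ should give clean $O(s)$ work space in exchange for $O(s^2)$ additional catalytic bits.

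First I would recall the non-catalytic Saks--Zhou construction: starting from a $\BPSPACE[s]$ machine, one iteratively composes Nisan's generator over roughly $\log s$ layers, at each layer halving the block size while introducing a fresh pseudorandom offset to prevent error compounding, and executed on the fly this yields total space $O(s^{3/2})$. I would then modify this to run only $k := (1-\alpha)\log s$ layers and then brute-force enumerate the residual randomness, which sits in blocks of $s^\alpha$ bits. A balanced accounting should yield work space $O(s \cdot s^\alpha) = O(s^{1+\alpha})$ and aggregate Saks--Zhou state of $O(s^{2-\alpha})$ bits summed across all $k$ layers, with hash descriptions and offsets accounting for the quadratic-in-$s$ factor.

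Next I would plug this parameterized derandomizer into the compress-or-compute shell. I extend the catalytic tape by $O(s^{2-\alpha})$ bits to hold the Saks--Zhou state, then case split as in \cite{CookLiMertzPyne24}: if the catalytic contents admit a short description, compress them to free enough additional work space to run the PRG computations and restore the compressed portion at the end; otherwise, incompressibility certifies that the tape carries near-full entropy, which we extract to seed the Saks--Zhou offsets in place of pseudorandomness. The abstract's promise of a ``simpler and more modular'' framework suggests that the same shell used for Theorems~\ref{cor:collapse-cnl}--\ref{cor:cat-IS} should apply here almost verbatim once the parameterized derandomizer is swapped in for the deterministic core.

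The hard part will be preserving the catalytic reset across many Saks--Zhou layers while simultaneously treating the catalytic tape as a randomness source in the incompressible branch. Each layer writes its offset onto the catalytic tape and must undo it exactly, yet in the incompressible branch the consumed randomness is drawn from those same bits, creating an apparent conflict. Resolving this should require designing each layer's update to be a reversible overlay (e.g.\ XOR) on the catalytic bits, so that the net effect of all $k$ layers composes to the identity on the tape. Verifying that the parameters balance correctly under this reversibility constraint, uniformly for all $\alpha \in [0,1/2]$, is the delicate step where I expect most of the technical work to live.
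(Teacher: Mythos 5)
Your route diverges substantially from the paper's, and I think it has genuine gaps that the paper's approach is specifically designed to avoid.

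The paper derives this theorem by first proving a generic reduction (their Theorem~\ref{thm:main-tech}): $\CBPSPACE[s,c] \subseteq \CSPACE[O(s),O(c)]^{\BPSPACE[O(s)]}$, where the oracle is queried once on an input of length $2^{O(s)}$. The compress-or-compute shell does no derandomization at all; its only job is to produce a size-$2^{O(s)}$ directed graph isomorphic to the forward-reachable configuration graph $\cGMt$. Once that graph exists, the decision problem (does the accepting halt node receive $\geq 2/3$ of the probability mass?) is a \emph{standard} promise-$\BPSPACE[O(s)]$ question on an ordinary input, and the paper simply invokes Pyne's existing result $\BPSPACE[O(s)] \subseteq \CSPACE[O(s^{1+\alpha}),O(s^{2-\alpha})]$ as a black box to replace the oracle. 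The entire $\alpha$-tradeoff comes from that citation, not from any parameterization done inside the catalytic argument. Your proposal, by contrast, tries to thread a hand-tuned Saks--Zhou recursion through the compress-or-compute loop itself, which means you are essentially re-deriving Pyne's theorem from scratch while simultaneously handling catalytic reset --- much harder and not what the paper does.

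There are also two concrete misconceptions. First, you have the compress-or-compute dichotomy backwards: in both~\cite{CookLiMertzPyne24} and this paper, the \emph{compute} branch fires when the relevant configuration trees are small (so they can be fully explored and the decision problem solved directly), and the \emph{compress} branch fires when a tree is large (large tree $\Rightarrow$ an $i$-th vertex gives a shorter description of $(\tau,i)$). Your proposal inverts this into ``compressible $\Rightarrow$ free space to compute, incompressible $\Rightarrow$ use as randomness,'' which is a different and much more delicate argument. Second, the ``incompressible tape as a randomness source'' idea is precisely the reconstructive-PRG machinery that~\cite{CookLiMertzPyne24} used for $\CBPL=\CL$, and the paper explicitly identifies it as the source of both the complexity and the restriction to $c = \mathrm{poly}(2^s)$ in their result. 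The present paper's contribution is to \emph{eliminate} that machinery by the single observation that one should traverse the $0$-graph backward from the two halt configurations $\accept, \reject$ rather than forward from $\start$; this makes the exploration independent of the random/non-deterministic choices, so the deterministic compress-or-compute argument carries over unchanged. Your proposal never engages with the issue that a forward traversal from $\start$ misses configurations reached via $1$-transitions, which is the obstruction the backward-traversal idea exists to solve. Without that idea (or a workable replacement) the shell does not extend to randomized machines, and the rest of the construction has nothing to stand on.
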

\noindent
This strongly extends the result of \cite{CookLiMertzPyne24} as well as
that of Pyne~\cite{Pyne24}, who showed the same result in the restricted case of $\CSPACE[s,c=0]$ on the left hand side.

Finally, $\CPSPACE[s,c]$ also obeys a Savitch-like theorem:
\begin{theorem}\label{cor:cat-derand-pl}
    For all $s := s(n)$, $c := c(n)$ such that $\log n \leq s \leq c \leq 2^s$,
    $$\CPSPACE[s,c] \subseteq \CSPACE[O(s^2),O(c)].$$
\end{theorem}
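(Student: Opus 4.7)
The plan is to instantiate the compress-or-compute framework with the Borodin--Cook--Pippenger deterministic simulation of $\PrSPACE[s]$, in direct analogy to how the catalytic Savitch Theorem (Theorem~\ref{cor:cat-savitch}) instantiates it with classical Savitch. Borodin, Cook, and Pippenger~\cite{BorodinCookPippenger83} show $\PrSPACE[s] \subseteq \SPACE[O(s^2)]$ by reducing ``$\Pr[\text{accept}] > 1/2$'' to computing the sign of the determinant of an integer matrix $M$ of dimension $2^{O(s)}$ extracted from the probabilistic machine's configuration graph, and then evaluating that sign in $\SPACE[O(s^2)]$ via iterated matrix multiplication. Crucially, this deterministic simulator accesses $M$ only through local queries that return a single entry at a time, and each entry is computable from the machine's transition function and a single configuration's worth of data.

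First I would formulate the catalytic analogue of the reduction: for a $\CPSPACE[s,c]$ machine $N$, input $x$, and catalytic string $\tau$, assemble the transition matrix $M_{x,\tau}$ over the $2^{O(s)}$ non-catalytic configurations. Whether $\Pr_r[N(x,\tau,r)=1] > 1/2$ corresponds to the sign of $\det(M_{x,\tau})$. Since every entry of $M_{x,\tau}$ depends only on $x$, one configuration's worth of state, and $O(1)$ cells of $\tau$, any entry can be read on demand without ever modifying $\tau$. Second, I would feed the BCP $\SPACE[O(s^2)]$ sign-of-determinant routine into the compress-or-compute framework developed in the overview. The framework is oblivious to the internals of its inner deterministic simulator: given any $\SPACE[O(s^2)]$ algorithm that accesses $\tau$ only read-only through such local queries, it either carries out the computation while keeping $\tau$ pristine on the catalytic tape, or it detects a compression of $\tau$, uses the freed bits as scratch space to solve the problem unconditionally, and restores $\tau$ from its compressed form. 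In either branch the work space is $O(s^2)$ and the catalytic space remains $O(c)$, which is exactly the claimed bound.

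The main obstacle, just as in the catalytic Savitch argument, is ensuring that the inner simulator is catalytic-clean. BCP's iterated-matrix-multiplication routine repeatedly writes intermediate entries of $O(s^2)$-bit precision, and I must route these writes to the work tape (or to a reserved, fully restored region of the catalytic tape) rather than to $\tau$ itself, and I must check that the transition probabilities of $N$ can be evaluated on the fly from $x$ and $\tau$ without staging large intermediate state. The second delicate point is that the compression encoding used by the compress branch must be invertible in the same $O(s^2)$ work-space budget so that $\tau$ can be reconstructed at the end of the computation on every initial $\tau$. Once these bookkeeping details --- which are exactly what the compress-or-compute framework is designed to absorb --- are in place, the theorem follows by direct black-box composition of BCP with the framework, with no further overhead in either $s$ or $c$.
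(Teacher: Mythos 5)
Your high-level plan (instantiate the framework with Borodin--Cook--Pippenger, just as the catalytic Savitch theorem instantiates it with classical Savitch) is indeed the paper's approach, but the way you actually set up the reduction has a gap that would sink the proof as written.

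The object ``transition matrix $M_{x,\tau}$ over the $2^{O(s)}$ non-catalytic configurations'' is not well-defined for a catalytic machine. The catalytic tape is read-\emph{write}; it only needs to be restored at the end. During the computation the catalytic contents change, and a transition out of a work-tape state $u$ depends on the current (not the initial) catalytic contents. So the state space is not the $2^{O(s)}$ non-catalytic configurations but the full $2^{O(s)}\cdot 2^{c}$ pairs $\conf{\pi}{u}$, and running BCP directly on that graph would cost $\SPACE[O((s+c)^2)]$, wildly over budget. Your follow-up claim that ``each entry $\ldots$ can be read on demand without ever modifying $\tau$'' is therefore false for a general catalytic machine, and your description of the framework as a black box that runs an arbitrary $\SPACE[O(s^2)]$ algorithm with read-only local access to $\tau$ misstates what the framework delivers: the oracle in \Cref{thm:main-tech} never touches $\tau$ at all.

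The missing idea is the order of the two steps. The compress-or-compute loop (\Cref{lemma:compressorcompute}) is not a wrapper around your downstream algorithm; it runs \emph{first}, and its deliverable in the compute branch is an explicit graph $G$ on at most $2^{2s+1}$ vertices whose forward-reachable component from $r$ is isomorphic to $\cGMt$. It is to $G$, a static object of size $2^{O(s)}$ written out on the oracle tape, that BCP's determinant/sign-of-acceptance-probability computation is applied. In other words, first establish $\CPSPACE[s,c] \subseteq \CSPACE[O(s),O(c)]^{\PrSPACE[O(s)]}$ via \Cref{thm:main-tech} (queries of length $2^{O(s)}$), then replace the $\PrSPACE[O(s)]$ oracle by the deterministic $\SPACE[O(s^2)]$ subroutine from \Cref{thm:BCP}. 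If you replace ``non-catalytic configurations of $N$'' with ``vertices of the graph $G$ produced by $\comporcomp$'' your outline becomes essentially the paper's proof.
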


\section{Catalytic Machines}
We first define catalytic Turing machines:
\begin{definition}[Catalytic space]
    A \emphdef{catalytic Turing Machine} with \emphdef{free space $s := s(n)$}
    and \emphdef{catalytic space $c := c(n)$} is a Turing machine with the following tapes:
    \begin{enumerate}
        \item a read-only input tape of length $n$ which is initialized to $x \in \{0,1\}^n$
        \item a read-write work tape of length $s$ which is initialized to $0^{s}$
        \item a read-write \emphdef{catalytic tape} of length $c \leq 2^s$ which is
        initialized to some $\tau \in \{0,1\}^{c}$
    \end{enumerate}
\end{definition}

Such machines can be augmented with additional resources in the traditional
manner, but as with ordinary space-bounded computation,
these machines need to be defined with care.
\ian{Following text seems like it needs a lot of reworking.}
At each time step, a non-deterministic/randomized machine has two
(not necessarily distinct) choices for its transition: the 0-choice and 1-choice.
Sometimes we can think of the choices as being selected according to the content of
an auxiliary non-deterministic/random tape that provides read-only one-way access to its
content. This view will be useful later to define the space bounded hierarchy.
For machines that compute binary functions we think of the machine as outputting 1 if it reaches an accepting state, and outputting 0 if it reaches a rejecting state.
We equip machines that output a value from a larger range with an output tape that provides write-only one-way access.
The machine is expected to write its output on this output tape.
Oracle machines are equipped with a write-only one-way access oracle tape. 
To make an oracle query, the machine writes its query on the oracle tape, issues a query request to its oracle, the tape is reset to empty and the machine transitions into its next state depending on the query answer.
A catalytic Turing machine $\cM$ is said to be \emphdef{valid} if for every $x$ and $\tau$, the machine halts in finite time with the catalytic tape containing $\tau$ regardless of its non-deterministic/random choices.
(In particular, the machine is not allowed to loop forever.)

\begin{definition}[Variants of catalytic space] \label{def:cspace-variants}
    For a boolean function $f$, we say a catalytic machine computes $f$ if
    \begin{itemize}
        \item \emphdef{(deterministic machine)} for every $x$, the machine outputs $f(x)$.
        \item \emphdef{(non-deterministic machine)} for every $x$, if $f(x)=1$ then there is a sequence on non-deterministic choices where the machine accepts $x$, and if $f(x)=0$ then for any sequence on non-deterministic choices the machine rejects.  A \emphdef{co-non-deterministic} machine must always accept when $f(x)=1$ and sometimes reject when $f(x)=0$.
        \item \emphdef{(bounded-error randomized machine)} for every $x$, the machine outputs $f(x)$ with probability at least 2/3 over its random choices. 
        \item \emphdef{(unbounded-error randomized machine)} for every $x$, the machine outputs $f(x)$ with probability greater than 1/2 over its random choices. 
    \end{itemize}
\end{definition}

\begin{definition}[Complexity classes]
    A catalytic Turing machine decides a language $L$ if it computes the characteristic function of $L$. For $s := s(n)$ and $c := c(n)$ we define
    \begin{itemize}
        \item $\CSPACE[s,c]$ to be the class of languages which can be computed by
        a catalytic Turing machine with free space $s$ and catalytic
        space $c$.
        \item $\CNSPACE[s,c]$ ($\coCNSPACE[s,c]$) to be the class of languages which can be
        computed (co-computed) by a non-deterministic catalytic Turing machine with
        free space $s$ and catalytic space $c$.
        \item $\CBPSPACE[s,c]$ to be the class of languages which can be
        computed by a bounded-error randomized catalytic Turing machine with
        free space $s$ and catalytic space $c$.
        \item $\CPSPACE[s,c]$ to be the class of languages which can be
        computed by an unbounded-error randomized catalytic Turing machine with
        free space $s$ and catalytic space $c$.
    \end{itemize}
\end{definition}


As discussed above, our main focus is on class catalytic logspace,
where we fix the parameters $s$ and $c$ to be logarithmic and polynomial in $n$,
respectively.
\begin{definition}[Catalytic logspace classes]
    We define the following instantiations:
    \begin{itemize}
        \item $\CL := \bigcup_{d \in \mathbb{N}} \CSPACE[d \log n,n^{d}]$
        \item $\CNL := \bigcup_{d \in \mathbb{N}} \CNSPACE[d \log n,n^{d}]$
        \item $\CBPL := \bigcup_{d \in \mathbb{N}} \CBPSPACE[d \log n,n^{d}]$
        \item $\CPL := \bigcup_{d \in \mathbb{N}} \CPSPACE[d \log n,n^{d}]$
    \end{itemize}
\end{definition}


\section{Main Technical Theorem}

All of our results follow from a generic reduction from catalytic space with additional resource $\cB$, to the corresponding non-catalytic space class: 
\begin{theorem}\label{thm:main-tech}
    Let $\mathcal{B} \in \{\text{\upshape\textsf{N}},\text{\upshape\textsf{coN}},\text{\upshape\textsf{BP}},\text{\upshape\textsf{Pr}}\}$.
    Then for all $s := s(n)$, $c := c(n)$ such that $\log n \leq s \leq c \leq 2^s$,
    $$\CBSPACE[s,c] \subseteq  \CSPACE[O(s),O(c)]^{\BSPACE[O(s)]}$$
    where queries made to the oracle are on inputs of length $2^{O(s)}$, and for $\mathcal{B} \in \{\text{\upshape\textsf{BP}},\text{\upshape\textsf{Pr}}\}$ the oracle
    is for the corresponding promise class.
\end{theorem}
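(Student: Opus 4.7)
The plan is to build a deterministic catalytic simulator $M'$ that, on input $(x, \tau)$, mimics the catalytic B-machine $M$ step by step. $M'$ stores $M$'s work tape and head positions in its $O(s)$ free space and tracks $M$'s catalytic tape (initially $\tau$) in its own $O(c)$ catalytic tape. Because any valid execution of $M$ ends with its catalytic tape restored to $\tau$, $M'$'s catalytic tape is automatically restored alongside, so $M'$ inherits catalytic correctness for free.

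The oracle's role is to absorb the resource $\mathcal{B}$. The key observation I would exploit is that between two consecutive writes to the catalytic tape by $M$, that tape is effectively read-only, and $M$ behaves exactly as a $\BSPACE[O(s)]$ machine on input $(x, C)$ of length $n + c \leq 2^{O(s)}$, where $C$ is the current catalytic tape content. This matches the oracle's signature precisely. I therefore divide the simulation into \emph{epochs} delimited by catalytic writes: for each epoch, $M'$ copies $(x, C, W)$ onto the one-way oracle tape (using only $O(s)$ counters, valid because $n, c \leq 2^s$) and queries the oracle to determine the outcome---either $M$ halts within this epoch and with what value, or the first position and value written to the catalytic tape and the resulting work state $W'$. $M'$ then applies this verdict to its own tapes and proceeds to the next epoch.

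The main obstacle I expect is consistency across epochs. For $\mathcal{B} \in \{\mathsf{N}, \mathsf{coN}\}$ I need to pick an epoch-boundary transition that lies on a \emph{globally} accepting path, not merely one reachable by some local non-deterministic choice; a priori this requires reachability in $M$'s $2^{O(s+c)}$-node configuration graph, far beyond the oracle's budget of $\NSPACE[O(s)]$ on inputs of length $2^{O(s)}$. For $\mathcal{B} \in \{\mathsf{BP}, \mathsf{Pr}\}$ I need to correctly aggregate acceptance probabilities across epochs; verifying the oracle's promise in the $\mathsf{BP}$ case will lean on the catalytic correctness of $M$, which guarantees bounded-error behavior on every $\tau$. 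My plan to resolve the consistency obstacle is the compress-or-compute dichotomy of \cite{CookLiMertzPyne24}: at each epoch boundary, the global ``tail'' question reduces either (i) to a structurally short oracle query of length $2^{O(s)}$ when the remaining computation admits a concise description that $M'$ can write down, or (ii) to a purely epoch-local oracle query when it does not, the latter leveraging an incompressibility argument on the catalytic contents. Making this dichotomy modular---so that it instantiates uniformly for all four choices of $\mathcal{B}$---is what allows all four corollaries in the introduction to follow from a single reduction, and is where our framework simplifies \cite{CookLiMertzPyne24}.
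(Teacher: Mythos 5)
Your proposal takes a genuinely different route from the paper, and the route has a gap that I do not believe the sketch resolves.

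The paper does not decompose the computation into epochs between catalytic writes. Instead it works with the entire configuration graph $\cGM$ over $\{0,1\}^{s+c}$ at once, and its key structural observation is the following: if you fix the auxiliary (non-deterministic or random) tape to all-zeros and \emph{forget edge directions}, the resulting $0$-graph $\cGMz$ breaks into disjoint trees, and every configuration reachable forward from $\start$ lies in the tree rooted at $\accept$ or the tree rooted at $\reject$. Because these are trees rooted at \emph{halting} configurations, they can be explored deterministically and reversibly by an Eulerian tour starting from the halt state. Either both trees are small (size $\leq 2^{O(s)}$), in which case the entire reachable subgraph $\cGMt$ can be written out explicitly as a $2^{O(s)}$-bit graph and handed to a single $\BSPACE[O(s)]$ oracle query (the oracle just answers a reachability/counting question on that explicit graph), or one tree is large, in which case a vertex from deep in the tree compresses the padded catalytic tape by roughly $s$ bits, and the compression iterates until $c$ bits are free and one can brute-force over $\tau'$ to find a good one. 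The dichotomy is run entirely over choices of the catalytic tape content, not over time steps of $\cM$.

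Your epoch plan has a real obstruction that you correctly flag but do not actually close. Between catalytic writes the machine is indeed a $\BSPACE[O(s)]$ machine on a $2^{O(s)}$-bit input, but for $\mathcal{B} \in \{\mathsf{N}, \mathsf{coN}\}$ acceptance is a global property of the full $2^{O(s+c)}$-node configuration graph: which catalytic write to commit to at an epoch boundary depends on the entire future, and a $\BSPACE[O(s)]$ oracle on input length $2^{O(s)}$ simply cannot see that far. You propose to fix this with the compress-or-compute dichotomy of \cite{CookLiMertzPyne24}, but you leave the dichotomy unspecified, and the cited dichotomy as it stands does \emph{not} cover non-determinism --- indeed the paper explicitly notes that \cite{CookLiMertzPyne24} ``could not accommodate nondeterministic catalytic machines, as missing even a single state in the random walks could make it impossible to decide the language.'' The enabling new idea in the paper --- traversing the undirected $0$-graph backward from the two halt states, where it is a tree, rather than forward from $\start$ or along random walks --- is precisely what makes a \emph{single}, model-independent compress-or-compute step work uniformly for $\mathsf{N}$, $\mathsf{coN}$, $\mathsf{BP}$, $\mathsf{Pr}$, and that idea is absent from your sketch. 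A secondary point: your oracle is asked to return a multi-bit verdict (write position, write value, new work state), whereas in the paper the oracle answers a single decision question on an explicitly written graph; this can be patched by bit-extraction queries, but it adds friction, especially around the $\mathsf{BP}$ promise, that the paper's single-query reduction avoids.
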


While \Cref{thm:main-tech} is stated in terms of oracles, this is just a
convenience for the terseness of our statement and proof; as we will shortly see,
all $\BSPACE$ oracles can be replaced by $\CSPACE$ machines with the appropriate parameters.

Perhaps more interestingly, the reader is encouraged to think of all results
as a $\CSPACE[O(s),O(c)]$-computable \textit{reduction} of $\CBSPACE[s,c]$ to $\BSPACE[O(s)]$; there is
only a single oracle query, and it outputs the solution to the original function.
We avoid stating it as a reduction only because we require postprocessing to reset
the catalytic tape, which, we note, can be done regardless of the query output as the
oracle halts with no changes to the state of the machine. This reduction cleanly separates manipulating the configuration graph of the catalytic machine from solving the corresponding decision problem on a \say{nice} configuration graph. Previous structural works ~\cite{BuhrmanKouckyLoffSpeelman18,DattaGuptaJainSharmaTewari20,CookLiMertzPyne24} were not able to obtain this separation, and this is why our proof is simpler despite giving stronger results.

\subsection{Derivation of Results}
Then our main results follow directly from existing simulations of randomized and non-deterministic (standard) space, which we now recall:
\begin{theorem}[\cite{BuhrmanCleveKouckyLoffSpeelman14}]\label{thm:NLinCL}
    For all $s := s(n) \geq \log n$,
    $$\NSPACE[O(s)] \subseteq \CSPACE\left[O(s),2^{O(s)}\right].$$
\end{theorem}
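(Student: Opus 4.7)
The plan is to reduce the problem to $st$-reachability on the configuration graph, then evaluate a suitable algebraic circuit for reachability using catalytic registers. Given $L \in \NSPACE[O(s)]$ with NTM $M$, I would first construct the configuration graph $G_{M,x}$ on $N = 2^{O(s)}$ vertices, together with a designated start vertex $v_{\text{start}}$ and a single accept vertex $v_{\text{acc}}$ (merging all accepting configurations). Since adjacency in $G_{M,x}$ is decidable in space $O(s)$ from $x$, it suffices to show that $st$-reachability on graphs of size $N$ lies in $\CSPACE[O(\log N), N^{O(1)}]$; setting $N = 2^{O(s)}$ yields the theorem.

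Next I would give an algebraic characterization of reachability amenable to catalytic evaluation. Work over a prime field $\mathbb{F}_p$ with $p > N$, so that a nonzero sum of at most $N$ terms in $\{0,1\}$ remains nonzero mod $p$. Let $A$ be the adjacency matrix of $G_{M,x}$. Then $v_{\text{acc}}$ is reachable from $v_{\text{start}}$ iff the $(v_{\text{start}}, v_{\text{acc}})$ entry of $B := (I+A)^{N}$ is nonzero mod $p$, which we compute by $\lceil \log N \rceil = O(s)$ rounds of matrix squaring. Expanding this into a straight-line arithmetic computation yields a sequence of $N^{O(1)}$ intermediate field elements that we wish to materialize, one per matrix entry at each squaring level.

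The core step is to evaluate this computation transparently using the BCKLS14 register trick. View the catalytic tape $\tau$ (of length $c = N^{O(1)}\cdot\log p = 2^{O(s)}$) as a tuple of registers $r_1,\dots,r_m \in \mathbb{F}_p$, one per intermediate entry. For each squaring step $B^{(k+1)} = B^{(k)} \cdot B^{(k)}$, and each target entry, we perform the update $r_{(k+1,i,j)} \leftarrow r_{(k+1,i,j)} + \sum_\ell r_{(k,i,\ell)} \cdot r_{(k,\ell,j)}$ by iterating over $\ell$ in the free workspace, which costs only $O(\log N) = O(s)$ bits for counters and a product buffer. After reading off the final answer from the entry corresponding to $(v_{\text{start}}, v_{\text{acc}})$ we unwind every update in reverse order, subtracting the same sums back from their target registers, which restores $\tau$ exactly. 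Because every arithmetic operation on $\mathbb{F}_p$ is invertible and the ``commit + uncompute'' schedule is symmetric, the catalytic tape is guaranteed to return to its initial contents regardless of the outcome of the computation.

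The main obstacle is precisely this restoration guarantee: writing intermediate values to the catalytic tape is only legal if we can cleanly reverse every write, and the recursion for undoing a product of two previously-computed registers must respect the invariant that those source registers also hold their original values at the moment of uncomputation. Scheduling updates level-by-level (computing level $k+1$ entirely from level $k$, then uncomputing level $k$ only after extracting the answer from the final level) and being careful that each register is written to exactly once in the forward pass avoids any interference, and this is what pins the catalytic usage to the total register count $2^{O(s)}$ rather than anything larger.
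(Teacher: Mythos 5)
Your high-level plan (reduce $\NSPACE$ to an algebraic reachability computation and evaluate it with the \cite{BuhrmanCleveKouckyLoffSpeelman14} register-program machinery) is indeed the approach behind this theorem, which the paper cites rather than reproves. But the central step as you wrote it is incorrect: the update
$r_{(k+1,i,j)} \leftarrow r_{(k+1,i,j)} + \sum_\ell r_{(k,i,\ell)} \cdot r_{(k,\ell,j)}$
reads the \emph{dirty} contents of the level-$k$ registers. After level $1$ you correctly have $r_{(1,i,\ell)} = \tau_{(1,i,\ell)} + B^{(1)}_{i\ell}$ (since level $0$ is read cleanly from the input), but then
$\sum_\ell r_{(1,i,\ell)} r_{(1,\ell,j)} = \sum_\ell \bigl(\tau_{(1,i,\ell)}\tau_{(1,\ell,j)} + \tau_{(1,i,\ell)}B^{(1)}_{\ell j} + B^{(1)}_{i\ell}\tau_{(1,\ell,j)}\bigr) + B^{(2)}_{ij}$,
so $r_{(2,i,j)}$ holds $\tau_{(2,i,j)}$ plus unknown $\tau$-dependent cross terms plus $B^{(2)}_{ij}$, and the garbage compounds at every subsequent level. ``Commit then uncompute'' does restore $\tau$ at the end, but the value you read off at level $L$ is garbage and tells you nothing about whether $B^{(L)}_{v_{\text{start}},v_{\text{acc}}}$ is zero. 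This is precisely the obstacle the Ben-Or--Cleve/BCKLS construction is built to overcome: to transparently add a product $g\cdot h$ of two sub-formula values to a target, one interleaves the subprograms computing $g$ and $h$ with four signed multiply-accumulates (compute $g$, add $r_g r_h$, compute $h$, subtract $r_g r_h$, uncompute $g$, add $r_g r_h$, uncompute $h$, subtract $r_g r_h$) so that all $\tau$-cross-terms cancel identically. This forces each subformula to be computed and uncomputed twice, giving the characteristic $4^{\text{depth}}$ blowup in program length; a flat ``one register per gate, evaluate in topological order'' circuit evaluation has no mechanism to cancel the cross terms and does not give a transparent program.

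Two smaller issues: first, the claim that $p > N$ suffices because the relevant quantity is ``a nonzero sum of at most $N$ terms in $\{0,1\}$'' is not right as stated, since the entries of $(I+A)^N$ are weighted walk counts with binomial coefficients and can be divisible by any fixed $p$; one needs either to work over $\mathbb{Z}$ (or $\mathbb{Z}_{2^k}$ with $k$ large enough to bound the walk count), or to route through a problem where nonvanishing mod $p$ is guaranteed. Second, once you switch to the corrected Ben-Or--Cleve recursion, you should check that the running time $4^{O(\text{depth})}$ and the register count are compatible with a valid catalytic machine of work space $O(s)$ and catalytic space $2^{O(s)}$; with $\text{depth} = O(s^2)$ for the repeated-squaring formula over $N = 2^{O(s)}$ vertices, the program length is $2^{O(s^2)}$, which is within the time budget $2^{O(s)+2^{O(s)}}$ of such a machine, and the register usage stays within $2^{O(s)}$.
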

\begin{theorem}[\cite{BorodinCookPippenger83,AllenderOgihara96},\cite{BuhrmanCleveKouckyLoffSpeelman14}]\label{thm:PLinCL}
    For all $s := s(n) \geq \log n$,
    $$\PrSPACE[O(s)] \subseteq \CSPACE\left[O(s),2^{O(s)}\right].$$
\end{theorem}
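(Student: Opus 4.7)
The plan is to chain two known ingredients: the reduction of unbounded-error probabilistic space to the integer determinant problem, and the catalytic evaluation of the resulting arithmetic circuit. For the first step, fix $L \in \PrSPACE[O(s)]$ and a probabilistic machine deciding it, whose configuration graph on input $x$ has $N = 2^{O(s)}$ nodes. Following~\cite{BorodinCookPippenger83,AllenderOgihara96}, the acceptance condition $\Pr[\text{accept}] > 1/2$ translates into a strict sign condition on $\det(M_x)$, where $M_x$ is an $N \times N$ integer matrix with entries of magnitude $2^{O(s)}$ derived from the transition probabilities of the configuration graph; each entry is computable in $O(s)$ space directly from the machine and $x$, since it depends only on a single local transition.

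The second step is to evaluate the sign of $\det(M_x)$ catalytically. Using a division-free algorithm such as Berkowitz's, $\det(M_x)$ is computed by a logspace-uniform arithmetic circuit over $\mathbb{Z}$ of size $2^{O(s)}$ and depth $O(s^2)$, all of whose intermediate integer values have bit-length $2^{O(s)}$. The construction of~\cite{BuhrmanCleveKouckyLoffSpeelman14} evaluates such circuits catalytically by allocating one register per wire on the catalytic tape and processing gates in topological order via reversible (``transparent'') arithmetic updates that preserve the initial tape contents, using only $O(\log \text{size})$ bits of free workspace at any time. Reading off the sign of the output register then decides $L$, and scaled to our parameters the procedure uses $O(s)$ free space and $2^{O(s)}$ catalytic space, matching the theorem.

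The main obstacle is verifying that the~\cite{BuhrmanCleveKouckyLoffSpeelman14} catalytic-evaluation framework is robust under scaling, rather than being pinned to the polynomial-size logarithmic-space regime. Two routine checks suffice: the transparent arithmetic primitives depend only on register bit-widths and gate arity, so a size-$S$ bit-width-$B$ circuit evaluates in $O(\log S + \log B)$ free space and $O(SB)$ catalytic space; and the logspace uniformity of Berkowitz's circuit for $N \times N$ matrices lifts to $O(\log N) = O(s)$-space uniformity at our chosen $N = 2^{O(s)}$. Combining these observations with the reduction of the first step yields $\PrSPACE[O(s)] \subseteq \CSPACE[O(s), 2^{O(s)}]$.
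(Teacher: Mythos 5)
Your derivation is exactly the one the paper intends: the paper supplies no explicit argument, merely remarking the result is ``easily derived'' from the cited sources, and your chain --- reduce $\PrSPACE[O(s)]$ to deciding the sign of an integer determinant of a $2^{O(s)}\times 2^{O(s)}$ matrix with $2^{O(s)}$-bit entries via \cite{BorodinCookPippenger83,AllenderOgihara96}, then evaluate the resulting division-free arithmetic circuit catalytically by scaling the register-per-wire transparent computation of \cite{BuhrmanCleveKouckyLoffSpeelman14} --- is precisely that derivation. The two scaling checks you isolate (free space depends only logarithmically on circuit size and bit-width, and the uniformity of the Berkowitz circuit lifts from $O(\log n)$ to $O(s)$) are the right ones and do go through, so the argument is sound.
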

\noindent
Note that the latter result is not stated anywhere but is easily derived.
\begin{theorem}[\cite{Savitch70}]\label{thm:savitch}
    For all $s := s(n) \geq \log n$,
    $$\NSPACE[O(s)] \subseteq \SPACE[O(s^2)].$$
\end{theorem}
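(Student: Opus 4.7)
The plan is to prove Savitch's theorem via the textbook middle-vertex recursion on reachability in the configuration graph of the non-deterministic machine.

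First, I would fix an $\NSPACE[O(s)]$ machine $M$ and an input $x$, and form its configuration graph $G_{M,x}$. A configuration (state, input and work head positions, work-tape contents) is encoded in $O(s)$ bits, so $G_{M,x}$ has at most $N := 2^{O(s)}$ vertices. By a standard modification---clearing the work tape and moving heads to fixed positions before accepting---I may assume $M$ has a unique start configuration $u_0$ and a unique accept configuration $u_{\mathrm{acc}}$. Then $M$ accepts $x$ iff $u_{\mathrm{acc}}$ is reachable from $u_0$ in $G_{M,x}$, and since the graph has at most $N$ vertices, reachability holds iff there is a directed path of length at most $N$.

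Next, I would define the recursive predicate $\mathsf{Reach}(u,v,k)$, which returns $1$ iff there is a directed $u$-to-$v$ path in $G_{M,x}$ of length at most $2^k$. The base case $k=0$ is decided in space $O(s)$ by checking whether $u=v$ or whether $v$ is obtained from $u$ by a single non-deterministic transition of $M$. For $k \geq 1$, the procedure iterates over all candidate midpoints $w \in \{0,1\}^{O(s)}$ and returns $1$ iff both $\mathsf{Reach}(u,w,k-1)$ and $\mathsf{Reach}(w,v,k-1)$ hold. Correctness is immediate from the identity that a path of length at most $2^k$ exists iff it splits at some vertex $w$ into two paths of length at most $2^{k-1}$. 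The top-level call is $\mathsf{Reach}(u_0, u_{\mathrm{acc}}, \lceil \log N \rceil)$, whose recursion depth is $\lceil \log N \rceil = O(s)$.

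Finally, I would bound the space. Each activation record on the recursion stack holds the triple $(u,v,k)$ together with the current midpoint guess $w$ and a constant number of control bits, for $O(s)$ bits per frame. With recursion depth $O(s)$, the total space is $O(s^2)$. The main obstacle is purely bookkeeping: one has to lay out the recursion stack explicitly on the deterministic work tape and ensure that as each recursive call returns, its frame is reclaimed so that sibling calls reuse the same physical tape cells (in particular, the midpoint slot $w$ is overwritten across the two recursive invocations). Once that discipline is in place, the $O(s^2)$ bound and the containment $\NSPACE[O(s)] \subseteq \SPACE[O(s^2)]$ follow.
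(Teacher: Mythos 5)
The paper cites Savitch's theorem from \cite{Savitch70} without giving a proof, so there is no in-paper argument to compare against. Your proof is the standard textbook middle-vertex recursion and is correct: the configuration graph has $2^{O(s)}$ vertices, the recursion $\mathsf{Reach}(u,v,k)$ has depth $O(s)$ with $O(s)$ bits per stack frame, giving $O(s^2)$ space overall.
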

\begin{theorem}[\cite{Immerman88,Szelepcs_nyi_1988}]\label{thm:IS}
    For all $s := s(n) \geq \log n$,
    $$\coNSPACE[O(s)] \subseteq \NSPACE[O(s)].$$
\end{theorem}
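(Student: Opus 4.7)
The plan is to reduce to $st$-non-reachability on the configuration graph $G = (V,E)$ of the given $\coNSPACE[O(s)]$ machine, where $|V| = N = 2^{O(s)}$, and then show that $st$-non-reachability on such graphs lies in $\NSPACE[O(\log N)] = \NSPACE[O(s)]$ via the \emph{inductive counting} technique of Immerman and Szelepcs\'{e}nyi.

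Let $R_k$ denote the set of vertices reachable from the source $s$ within $k$ edges, and $r_k = |R_k|$. The core subroutine $\textsc{Test}(v,k,r_{k-1})$ non-deterministically decides whether $v \in R_k$ assuming $r_{k-1}$ is correct: iterate over all $u \in V$ and, for each $u$, non-deterministically declare either ``$u \in R_{k-1}$,'' in which case guess and verify a path of length at most $k-1$ from $s$ to $u$ edge by edge, or ``$u \notin R_{k-1}$''; maintain a counter of successful declarations. Reject unless exactly $r_{k-1}$ declarations succeed, and accept iff some declared $u$ satisfies $u = v$ or $(u,v) \in E$. Soundness follows because at most $r_{k-1}$ vertices truly lie in $R_{k-1}$, so reaching the count $r_{k-1}$ forces every declaration to correspond to a genuine member of $R_{k-1}$.

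To lift this to counting, guess $r_k$ and verify it by running $\textsc{Test}(v,k,r_{k-1})$ for each $v \in V$, tallying successes and rejecting if the final tally differs from the guess. Iterating from the base case $r_0 = 1$ up to $k = N$ produces $r_N$, with only $O(\log N) = O(s)$ space used in total for counters, vertex indices, and path-length counters, since each iteration can overwrite the previous. To decide $t \notin R_N$, rerun the counting at level $N$ but skip $v = t$, and accept iff the tally equals $r_N$; this certifies that all $r_N$ reachable vertices differ from $t$.

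The main obstacle is ensuring that the nondeterministic branch carrying the correct $r_{k-1}$ is the unique surviving branch throughout all $N$ levels; this is precisely what the ``exactly $r_{k-1}$ successful declarations'' check enforces, since an overcount is impossible (there simply are not that many members of $R_{k-1}$ to certify) and an undercount is caught by the verification at the next level when the new guessed $r_k$ fails to be reached. Everything else---the configuration-graph reduction, path guessing, and counter updates---is routine in $O(s)$ space.
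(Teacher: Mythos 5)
The paper states this as the classical Immerman--Szelepcs\'{e}nyi theorem and cites it without giving a proof, so there is no ``paper's own proof'' to compare against; your write-up is a correct account of the standard inductive counting argument from those works. You reduce co-nondeterministic acceptance to $st$-non-reachability in the $2^{O(s)}$-vertex configuration graph, iteratively compute the counts $r_k = |R_k|$ in $O(s)$ space by forcing any non-aborting branch to enumerate \emph{all} of $R_{k-1}$ (the count check rules out overcounting, and any branch that fails to re-derive the full set aborts before reaching $r_{k-1}$), and finally certify that $t$ is excluded. One small bookkeeping remark: your closing step compares a tally over $v \neq t$ against $r_N$, but the calls to $\textsc{Test}$ at level $N$ require $r_{N-1}$, so you must hold both $r_{N-1}$ and $r_N$ simultaneously (still $O(s)$ space); a marginally cleaner finish is to stop the induction at $r_{N-1}$ and simply run $\textsc{Test}(t,N,r_{N-1})$ once, accepting iff it returns ``not reachable.''
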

\begin{theorem}[\cite{Pyne24}]\label{thm:BPLinCL}
    For all $s := s(n) \geq \log n$ and all $\alpha \in [0,1/2]$,
    $$\BPSPACE[O(s)] \subseteq \CSPACE\left[O(s^{1+\alpha}),O(s^{2-\alpha})\right].$$
\end{theorem}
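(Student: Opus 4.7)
The plan is to combine Nisan's pseudorandom generator with a Saks--Zhou style rounding scheme, placing the bookkeeping of intermediate partial products on the catalytic tape rather than on free workspace. Let $M$ be a $\BPSPACE[O(s)]$ machine; distinguishing acceptance from rejection amounts to approximating a marked entry of $A^T$, where $A$ is the $2^{O(s)} \times 2^{O(s)}$ stochastic transition matrix of $M$'s configuration graph and $T = 2^{O(s)}$ is a runtime bound.

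First I would factor $T = (2^\ell)^k$ with $\ell = s^\alpha$ and $k = s^{1-\alpha}$, decomposing the walk into $k$ successive blocks of length $2^\ell$. Nisan's PRG fools any fixed length-$2^\ell$ walk to inverse-polynomial-in-$|V|$ error with seed length $O(s \ell) = O(s^{1+\alpha})$ and is evaluable in free space $O(s \ell)$. I would apply Saks--Zhou rounding at each block boundary so that the $k$ independently seeded block approximations do not accumulate error catastrophically, giving a final approximation sufficient to distinguish $2/3$ from $1/3$.

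Second I would store the $k$ rounded block outputs on the catalytic tape. Each rounded summary is $O(s)$ bits, so the total catalytic use is $O(s \cdot k) = O(s^{2-\alpha})$, while the free tape needs only the currently active block's Nisan seed, the PRG evaluator state, and an $O(s)$ scratch region, for a total of $O(s^{1+\alpha})$. Composing the stored block summaries via a telescoping product recovers the target entry of $A^T$ within the controlled Saks--Zhou error. Note that the two endpoint regimes $\alpha = 1/2$ and $\alpha = 0$ exactly recover Saks--Zhou's $\SPACE[s^{3/2}]$ bound and the quadratic-catalytic bound of \cite{CookLiMertzPyne24}, respectively.

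The main obstacle I expect is preserving the catalytic invariant, since naively writing block outputs destroys the initial contents $\tau$. I would reserve a region of the catalytic tape and XOR each block output into a preallocated slot, making every write a reversible group operation. Since each block output is a deterministic function of the input, the block index, the seed currently on the free tape, and the previously written summary, the machine can replay the blocks in reverse order at the end, recomputing each output and XORing it back out to restore $\tau$ exactly. The Saks--Zhou correctness analysis transfers verbatim, since the rounding is data-oblivious and the telescoping identity that composes the per-block approximations is independent of where the partial products are physically stored.
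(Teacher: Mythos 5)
This theorem is imported from \cite{Pyne24} and used as a black box; the present paper contains no proof of it, so there is nothing in-paper to compare your argument against. Judging the proposal on its own, the high-level plan---decompose $T=(2^\ell)^k$ with $\ell=s^\alpha$ and $k=s^{1-\alpha}$, run Nisan within each block with an $O(s\ell)=O(s^{1+\alpha})$ seed in free space, apply Saks--Zhou rounding between blocks, and push the $O(sk)=O(s^{2-\alpha})$ bits of per-level bookkeeping onto the catalytic tape---hits the right parameter tradeoff and is in the spirit of what Pyne does.

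However, there are two concrete gaps. First, you account only for the $k$ rounding shifts. The Saks--Zhou evaluation is a depth-$k$ recursion: reading an entry of the level-$i$ matrix requires a $2^\ell$-step pseudorandom walk querying entries of level $i-1$, and so on down to level $0$, so at any instant the machine holds $k$ active frames, each with a current walk position of $\Theta(s)$ bits. That recursion stack is another $\Theta(sk)=\Theta(s^{2-\alpha})$ bits, which already exceeds the $O(s^{1+\alpha})$ free-space budget for every $\alpha<1/2$, and you do not explain how it is to live on the catalytic tape. Second, the XOR-write trick makes a slot reversible but also \emph{unreadable}: slot $i$ holds $\tau_i\oplus r_i$ with $\tau_i$ unknown, so you cannot recover $r_i$ by looking at the tape. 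To use $r_i$ during the forward pass you would have to recompute it, but $r_i$ is the good shift at level $i$, which depends on $M_i$, which depends on all lower shifts, each of which is itself only available XOR-ed; unwinding this recomputation is again a depth-$i$ recursion costing $\Theta(sk)$ free bits. The standard way around both problems in this setting is to treat the catalytic tape's existing contents \emph{as} the randomness, reading them without modification and handling the rare bad $\tau$ via a compression argument---which is Pyne's actual route and is also the compress-or-compute philosophy the present paper builds on. As stated, your proposal does not explain how the stored summaries are read back without exhausting the free space, so the catalytic-invariant bookkeeping does not yet close.
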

\begin{theorem}[\cite{BorodinCookPippenger83}]\label{thm:BCP}
    For all $s := s(n) \geq \log n$,
    $$\PrSPACE[O(s)] \subseteq \SPACE[O(s^2)].$$
\end{theorem}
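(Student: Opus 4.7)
My strategy is to reduce acceptance of a $\PrSPACE[O(s)]$ machine to computing the sign of the determinant of an integer matrix of dimension $N \times N$ with $N = 2^{O(s)}$ and small entries, and then invoke the classical fact that integer determinants lie in $\SPACE[O(\log^{2} N)]$. Given a probabilistic Turing machine $M$ using space $O(s)$, I would implicitly build its configuration graph on $N = 2^{O(s)}$ vertices. After standard normalization I can assume every non-halting configuration has exactly two successors, each entered with probability $1/2$, and that $M$ halts on every random string. Letting $p_v$ be the probability that $M$ accepts when started from configuration $v$, the vector $p$ satisfies the linear system $(2I - A) p = 2b$, where $A$ is the $0/1$ transition matrix of the configuration graph and $b$ is the indicator vector of accepting halting configurations.

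Since $M$ halts on every random string, $(2I - A)$ is invertible, so $p$ is uniquely determined. The acceptance condition $p_{v_{\mathrm{start}}} > 1/2$ then translates by Cramer's rule into a single inequality on integer determinants: the sign of $\det(M')-\tfrac{1}{2}\det(2I-A)$, where $M'$ is obtained from $2I - A$ by substituting $2b$ into the column indexed by the start configuration. Both matrices are integer, $N \times N$, and have entries in $\{-1,0,1,2\}$, so their determinants have magnitude at most $N^{O(N)} = 2^{2^{O(s)}}$ and therefore bit-length at most $2^{O(s)}$, which is easily manipulable in space $O(s^2)$.

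Finally, I would invoke the classical result that computing $\det$ of an integer $N \times N$ matrix with polynomially bounded entries lies in $\SPACE[O(\log^{2} N)]$, via Csanky-style iterative algorithms or iterated matrix powering. Setting $N = 2^{O(s)}$ yields the desired bound of $\SPACE[O(s^{2})]$. The main obstacle will be uniformity: the matrix $M'$ cannot be written down explicitly, as that would require $2^{O(s)}$ space, so the determinant subroutine must access its input through a small-space oracle that computes $(i,j) \mapsto M'_{ij}$ from a succinct encoding of configurations, and this must be composed with the $\SPACE[O(\log^{2}N)]$ determinant algorithm without blowup. A secondary subtlety is the strict $>1/2$ threshold in the definition of $\PrSPACE$, which forces the reduction to output the \emph{sign} of a determinant expression rather than merely test nonzeroness; this is handled by comparing two integer determinants rather than one.
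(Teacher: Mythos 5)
The paper does not re-prove this theorem; it cites Borodin--Cook--Pippenger (and Allender--Ogihara for the closely related Theorem~\ref{thm:PLinCL}). Your sketch recovers essentially the BCP argument: from a halting probabilistic machine, form the $N\times N$ configuration-transition matrix with $N=2^{O(s)}$, observe that $A/2$ is nilpotent (every path hits a halting, zero-out-degree configuration within $N$ steps), so $2I-A$ is invertible with $\det(2I-A)=2^N$, write the acceptance probability via Cramer's rule as a ratio of determinants, and reduce acceptance to the sign of a single integer determinant accessed implicitly through the machine's transition function. Your flagging of uniformity (the matrix is exponentially large and can only be queried entry-by-entry) and of the strict $>1/2$ threshold forcing a sign comparison rather than a nonzero test matches the real subtleties of the reduction.

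One sentence is wrong as stated, and it happens to sit exactly where the technical content of the theorem lives: you write that the determinants have ``bit-length at most $2^{O(s)}$, which is easily manipulable in space $O(s^2)$.'' A $2^{O(s)}$-bit integer cannot be stored, let alone manipulated directly, in $O(s^2)$ space --- already for $s=\log n$ this would put $\mathrm{poly}(n)$ bits into $O(\log^2 n)$ cells. The $\SPACE[O(\log^2 N)]$ determinant-sign procedure never materializes the determinant; it computes residues modulo $O(\log N)$-bit primes and determines the sign from the residue vector without reconstructing the integer, which is precisely what Borodin--Cook--Pippenger's ``well-endowed ring'' machinery supplies. A pointer to ``Csanky-style'' algorithms does not cover this, since those are typically stated over fields with no control of integer bit-growth; the correct reference for the determinant-in-$\SPACE[\log^2]$ step is essentially BCP itself. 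With that correction --- the determinant is never written down, only its sign is extracted --- your plan is sound and coincides with the cited proof.
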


From this we can immediately derive all our corollaries using~\Cref{thm:main-tech}.
In particular: \Cref{cor:collapse-cnl} follows from~\Cref{thm:NLinCL};
\Cref{cor:collapse-cpl} follows from~\Cref{thm:PLinCL};
\Cref{cor:cat-savitch} follows from~\Cref{thm:savitch};
\Cref{cor:cat-IS} follows from~\Cref{thm:IS};
\Cref{cor:cat-derand} follows from~\Cref{thm:BPLinCL}; and
\Cref{cor:cat-derand-pl} follows from~\Cref{thm:BCP}.

\subsection{Discussion}
Before going into the proof of \Cref{thm:main-tech}, we
note some corollaries and extensions of our main results.

\paragraph{Catalytic hierarchies.}
All our results can be scaled up to the
\textit{non-deterministic catalytic hierarchy}, defined by classes $\SigmaCSPACE$
and $\PiCSPACE$, as well as the \textit{randomized non-deterministic catalytic
hierarchy}, defined by classes $\MACSPACE$ and $\AMCSPACE$.
While $\SigmaLi=\NL \subseteq \CL$ for all $k$, and
Sdroievski~\cite{Sdroievski24} showed that $\MAL$ is contained
in $\CL$, there are no previously known results connecting the
catalytic non-deterministic hierarchies to $\CL$, even under
the assumption that $\CNL = \coCNL$.

\begin{theorem}~\label{cor:hierarchy}
    $$\CL = \bigcup_{k \in \mathbb{N}} \SigmaCLi = \bigcup_{k \in \mathbb{N}} \MACLi \quad (= \bigcup_{k \in \mathbb{N}} \PiCLi = \bigcup_{k \in \mathbb{N}} \AMCLi).$$
\end{theorem}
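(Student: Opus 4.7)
The plan is to argue by induction on the hierarchy level $k$, combining the main collapse results (\Cref{cor:collapse-cnl} and \Cref{cor:collapse-cpl}) with a standard oracle-internalization lemma. For the base cases, the first level of $\SigmaCLi$ is exactly $\CNL = \CL$ by \Cref{cor:collapse-cnl}, and the first level of $\MACLi$ lies inside $\CPL = \CL$ by \Cref{cor:collapse-cpl}, since the Arthur verifier is a $\CBPL$ computation and $\CBPL \subseteq \CPL$. The corresponding $\PiCLi$ and $\AMCLi$ hierarchies then collapse via the trivial complement-closure of $\CL$ (flip the output bit).

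For the inductive step, I would view a level-$(k{+}1)$ alternating (or probabilistic) catalytic machine as a $\CNL$ (or Merlin-then-Arthur) computation with oracle access to a level-$k$ language, which by the inductive hypothesis lies in $\CL$. Each such step then reduces to the key lemma $\CNL^{\CL} \subseteq \CL$ (and the randomized analogue $\CBPL^{\CL} \subseteq \CL$). To prove this, given an outer $\CNL$ machine $M$ with oracle access to a $\CL$ language $A$ decided by a $\CL$ machine $N$, I would construct an oracle-free $\CNL$ machine $M'$ whose catalytic tape is the concatenation of $M$'s original tape $\tau_M$ and a fresh region $\tau_N$ sized for $N$. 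Each oracle query $q$ is answered by directly running $N$ on $q$ using $\tau_N$, reading off the answer, and continuing with $M$'s simulation. Because $N$ is a deterministic valid catalytic machine, its call restores $\tau_N$ on every non-deterministic branch of $M'$, so $M'$ is itself a valid catalytic machine; the free space grows by only $O(\log n)$ and the catalytic tape by a polynomial factor.

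The main obstacle is formalizing this oracle internalization carefully: on every non-deterministic or random branch of $M'$, the inlined call to $N$ must halt and fully restore $\tau_N$ before control returns to $M$, and the final state of $\tau_M$ must be inherited from $M$'s validity. Both points follow from $N$ being deterministic and from cleanly segregating the two catalytic regions, but need to be verified so that $M'$ satisfies the valid-machine condition of \Cref{def:cspace-variants}. Once in place, the induction immediately gives $\SigmaCLi, \PiCLi, \MACLi, \AMCLi \subseteq \CL$ at every level, matching the trivial reverse inclusions.
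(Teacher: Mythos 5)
Your approach (induction on the hierarchy level combined with an oracle-internalization lemma $\CNL^{\CL}\subseteq\CL$) is genuinely different from the paper's, and it has a gap at the very first step. The paper defines $\SigmaCLi$-type classes via a \emph{single} machine with one catalytic tape and one one-way, read-once tape carrying $y_1,\dots,y_k$ written outermost-to-innermost, and explicitly warns that such hierarchy machines must be defined carefully to avoid a blow-up in power. Your inductive step implicitly assumes that a level-$(k{+}1)$ machine can be rewritten as an outer $\CNL$ machine querying an oracle for a level-$k$ language, but you never establish this decomposition, and it is not automatic in the paper's model. The inner ``residual'' computation after the outer machine has consumed $y_1$ does not start from a fresh catalytic tape; it inherits the catalytic tape contents (and work-tape state) produced while reading $y_1$. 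Your construction gives $N$ a separate region $\tau_N$, which changes the model, and alternatively passing the whole state of $\tau_M$ plus $y_1$ as the oracle query is a $\poly(n)$-bit string that you have nowhere to store or re-derive on demand without spelling out a further argument. You also wave at ``the corresponding $\AMCLi$ hierarchies then collapse via the trivial complement-closure of $\CL$,'' but $\textup{\textsf{AM}}$ is not the complement of $\textup{\textsf{MA}}$, so this does not follow immediately.

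The paper avoids all of this by not inducting on $k$ at all. Its argument is that the compress-or-compute reduction of \Cref{thm:main-tech} applies \emph{verbatim} to a hierarchy machine, because the one-way read-once access to $y_1\cdots y_k$ means the machine's traversal has exactly the configuration-graph structure used in the proof; the only change is that the $\BSPACE[O(s)]$ oracle now solves a $\Sigma_k$/$\Pi_k$ (or $\textup{\textsf{MA}}$/$\textup{\textsf{AM}}$) reachability-type problem on a graph of size $2^{O(s)}$. For $\textup{\textsf{N}}/\textup{\textsf{coN}}$ hierarchies this oracle is in $\NSPACE[O(s)]$ by the classical collapse of the space-bounded alternation hierarchy (Immerman--Szelepcs\'{e}nyi iterated), and for the probabilistic hierarchies in $\NSPACE[O(s)]$ after a Nisan/Saks--Zhou-style derandomization of the Arthur moves, yielding \Cref{cor:hierarchy-cspace}; then $\CNL=\CL$ finishes the job. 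If you want to pursue the oracle-internalization route, you would first need to prove the inductive decomposition for the paper's specific read-once hierarchy model, and that is exactly the nontrivial part you are currently taking for granted.
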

\begin{theorem}\label{cor:hierarchy-cspace}
    For all $s := s(n)$, $c := c(n)$ such that $\log n \leq s \leq c \leq 2^s$
    and for all $k \in \mathbb{N}$,
    \begin{align*}
        \SigmaCSPACE[s,c], \PiCSPACE[s,c] &\subseteq \CNSPACE[O(s),O(c)]\\
        \MACSPACE[s,c], \AMCSPACE[s,c] &\subseteq \CNSPACE[O(s),O(c+s^2)].
    \end{align*}
\end{theorem}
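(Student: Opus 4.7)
\Cref{cor:hierarchy} follows from \Cref{cor:hierarchy-cspace} by specializing to $s = O(\log n)$, $c = \poly(n)$, invoking \Cref{cor:collapse-cnl} to collapse $\CNL$ to $\CL$, and using the trivial reverse inclusions $\CL \subseteq \SigmaCLi$ etc. So the task reduces to proving \Cref{cor:hierarchy-cspace}.

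For the non-deterministic part $\SigmaCSPACE[s,c], \PiCSPACE[s,c] \subseteq \CNSPACE[O(s), O(c)]$, the plan is induction on the alternation depth $k$, driven by \Cref{cor:cat-IS} (catalytic closure under complement). The $k = 1$ case is exactly \Cref{cor:cat-IS}. For the inductive step, $\SigmaCSPACE$ at level $k+1$ is an existential quantifier over a $\PiCSPACE$ predicate at level $k$, which by the inductive hypothesis lies in $\CNSPACE[O(s), O(c)]$; we therefore land in $\CNSPACE[O(s), O(c)]^{\CNSPACE[O(s), O(c)]}$, which collapses to $\CNSPACE[O(s), O(c)]$ by the same guess-and-verify argument that gives $\NL^{\NL} = \NL$ from $\coNL = \NL$ (with \Cref{cor:cat-IS} providing $\coCNSPACE = \CNSPACE$ inside the verifier). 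The $\Pi_{k+1}$ case is symmetric.

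For the randomized-nondeterministic part $\MACSPACE[s,c], \AMCSPACE[s,c] \subseteq \CNSPACE[O(s), O(c + s^2)]$, the plan is to derandomize the single bounded-error quantifier via \Cref{cor:cat-derand} at $\alpha = 0$, which simulates $\CBPSPACE[s, c]$ in $\CSPACE[O(s), O(c + s^2)]$. Since $\MACSPACE[s,c] \subseteq \exists \cdot \CBPSPACE[s, c]$, substituting gives $\MACSPACE[s,c] \subseteq \exists \cdot \CSPACE[O(s), O(c + s^2)] \subseteq \CNSPACE[O(s), O(c + s^2)]$. The $\AMCSPACE$ case is handled analogously by relativizing the derandomization: the Nisan-style pseudorandom generator underlying \Cref{thm:BPLinCL} fools read-once branching programs regardless of black-box oracle gates, so $\CBPSPACE[s,c]^{\exists} \subseteq \CSPACE[O(s), O(c+s^2)]^{\exists} \subseteq \CNSPACE[O(s), O(c+s^2)]$.

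The main obstacle I anticipate is the catalytic oracle model implicit in the $\Sigma_k$ inductive collapse: each oracle invocation must preserve the outer machine's catalytic tape, and one must ensure the inner $\CNSPACE$ subroutine can share the same physical catalytic tape rather than requiring fresh catalytic space per call. A cleaner alternative, closer to the paper's modular philosophy, is to apply the compress-or-compute framework of \Cref{thm:main-tech} directly to the alternating/randomized catalytic configuration graph, reducing the whole problem in one stroke to a $\Sigma_k^{\NSPACE[O(s)]}$ or $\MA^{\NSPACE[O(s)]}$ decision problem of size $2^{O(s)}$. That inner problem collapses to $\NSPACE[O(s)]$ via Immerman--Szelepcs\'enyi (\Cref{thm:IS}) in the non-deterministic case and absorbs the $O(s^2)$ derandomization cost of \Cref{thm:BPLinCL} in the randomized case, yielding the claimed bounds without any nested catalytic bookkeeping.
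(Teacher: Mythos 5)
Your ``cleaner alternative'' paragraph is essentially the paper's intended route, and it is the right one: the paper proves \Cref{cor:hierarchy-cspace} by observing that the compress-or-compute argument behind \Cref{thm:main-tech} is agnostic to the type of auxiliary tape (only the all-zeros halting trajectory and tree structure of the $0$-graph matter), so it extends verbatim to $\Sigma_k$/$\Pi_k$/$\MA$/$\AM$ catalytic machines, producing a $\CSPACE[O(s),O(c)]$ machine with a single query of length $2^{O(s)}$ to a $\Sigma_k\SPACE[O(s)]$ (resp.\ $\MA\SPACE[O(s)]$, $\AM\SPACE[O(s)]$) oracle. From there, $\Sigma_k\SPACE[O(s)]=\NSPACE[O(s)]$ by iterated Immerman--Szelepcs\'enyi, and the oracle is absorbed by letting the catalytic machine guess the $\NSPACE$ witness, giving $\CNSPACE[O(s),O(c)]$; for $\MA$/$\AM$, the randomized phase of the oracle is derandomized via \Cref{thm:BPLinCL} at $\alpha=0$, costing an extra $O(s^2)$ catalytic space, while the existential phase is again absorbed as non-determinism. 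Your first route --- inducting on $k$ via $\CNSPACE^{\CNSPACE}$ collapse and $\exists\cdot\CBPSPACE$ --- is genuinely different, and the obstacle you flag is real: catalytic-oracle composition is not a black box here (the inner call must share rather than refresh the catalytic tape, and the read-once witness must interleave correctly with the derandomization), so without spelling out that bookkeeping the induction does not obviously close; moreover nothing in the paper proves $\CNSPACE^{\CNSPACE}\subseteq\CNSPACE$. The compress-or-compute reduction sidesteps all of this by pushing every quantifier onto a small \emph{non-catalytic} graph problem, which is exactly the paper's selling point. One small precision worth adding to the $\AM$ case: the point is that the random walk composed with the fixed ``reach'' coloring of terminal states is still a read-once branching program of width $2^{O(s)}$, so the PRG from \Cref{thm:BPLinCL} fools it; the cost of actually \emph{evaluating} the reach predicate per seed is paid in non-determinism (using \Cref{thm:IS} to decide it either way), not in catalytic space.
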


Of note, space-bounded hierarchy machines need to be defined carefully,
as too much access to
the various quantifiers at different points of time can result in a sharp
increase in power.
We define these machines as taking quantified variables $y_1 \ldots y_k$,
with the appropriate notions of accepting or rejecting over the choices of
$y_i$, but with the additional restriction that the quantified variables are
written from outermost to innermost on the non-deterministic tape, meaning each $y_j$ can be accessed
in a read-once fashion, and no $y_j$ can be read after $y_{j'}$ for $j' > j$.
For this definition, it is fairly straightforward to show that even the
unbounded-depth hierarchy is only moderately strong:
\begin{theorem}
    For all $k \leq \poly n$,
    \begin{align*}
        \SigmaCLi, \PiCLi &\subseteq \ZPP\\
        \MACLi, \AMCLi &\subseteq \BPP.
    \end{align*}
\end{theorem}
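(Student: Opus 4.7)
The plan is to combine two ingredients: the catalytic hierarchy collapse of \Cref{cor:hierarchy-cspace} and the folklore containment $\CL \subseteq \ZPP$ due to \cite{BuhrmanCleveKouckyLoffSpeelman14}. By \Cref{cor:hierarchy-cspace} together with our main collapse $\CL = \CNL$ (\Cref{cor:collapse-cnl}), all four classes $\SigmaCLi, \PiCLi, \MACLi, \AMCLi$ are placed inside $\CL$ (after absorbing the $O(s^2)$ catalytic overhead from the $\MA/\AM$ case into the polynomial budget of $\CL$). Crucially, \Cref{cor:hierarchy-cspace} is stated for arbitrary $k$, and to avoid a multiplicative blow-up for $k = \poly(n)$, we instantiate the compress-or-compute template of \Cref{thm:main-tech} in a single shot with an alternating-space oracle of depth $k$, rather than peel quantifiers one at a time. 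The parameters of the resulting catalytic machine then stay at $O(s)$ free and $O(c)$ catalytic regardless of $k$.

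To conclude for $\SigmaCLi, \PiCLi$, we invoke the standard averaging argument for $\CL \subseteq \ZPP$. On input $x$, we sample a uniform catalytic content $\tau \in \{0,1\}^c$, simulate the deterministic $\CL$ machine on $(x,\tau)$ for $T = 2 \cdot 2^{O(s)}$ steps, and output its answer if it halts (else ``don't know''). The key observation is that, since the catalytic machine is deterministic and terminates with the catalytic tape restored to $\tau$, its trajectories from distinct starting contents $\tau \neq \tau'$ are node-disjoint in the configuration graph: each trajectory ends at the unique halting configuration whose catalytic tape equals the $\tau$ it started from. Summing across all $2^c$ starting contents, the total trajectory length is bounded by the number of configurations, $2^{O(s)+c}$, giving an average of $2^{O(s)} = \poly(n)$ steps per $\tau$. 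Markov's inequality then gives halting probability at least $1/2$, and on halting the output equals $f(x)$, placing us in $\ZPP$.

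For the $\MACLi, \AMCLi$ cases, the same $\tau$-sampling simulation is combined with the inherent two-sided error of the inner randomized computation, which the catalytic-tape sampling does not eliminate; this yields a $\BPP$ algorithm with (amplifiable) constant two-sided error rather than a $\ZPP$ algorithm. The principal obstacle is the parameter accounting in the hierarchy collapse step for $k$ growing polynomially with $n$: a naive peel-by-peel iteration of \Cref{thm:main-tech} would compound its constant-factor parameter overheads multiplicatively across $k = \poly(n)$ levels, exceeding the polynomial catalytic budget. The resolution, enabled by the genericity of the compress-or-compute framework over the choice of resource $\mathcal{B}$ in \Cref{thm:main-tech}, is to treat the entire $k$-alternation computation as a single oracle query to an alternating (or $\MA$-type) space-bounded oracle, absorbing all $k$ quantifiers in one reduction step.
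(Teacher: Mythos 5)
Your chain $\SigmaCLi \subseteq \CNL = \CL \subseteq \ZPP$ has a genuine gap at its first step once $k$ grows to $\poly(n)$, and the ``one-shot'' fix you propose does not close it. After one application of \Cref{thm:main-tech} with $\mathcal{B} = \Sigma_k$, you land in $\CSPACE[O(s),O(c)]^{\Sigma_k\SPACE[O(s)]}$, which indeed avoids any $k$-fold multiplicative overhead in the \emph{reduction}. But to then conclude membership in $\CL$, you still have to absorb the oracle class $\Sigma_k\SPACE[O(s)]$ (on inputs of length $2^{O(s)} = \poly(n)$) into a $\CSPACE[O(s),2^{O(s)}]$ machine, and there is no analogue of \Cref{thm:NLinCL} for this class when $k = \poly(n)$. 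With the read-once alternating-tape model, a $\Sigma_{\poly(n)}\SPACE[O(\log n)]$ machine can evaluate two-player reachability games on polynomial-size DAGs, which is $\P$-complete; so $\SigmaCLi$ for $k = \poly(n)$ is $\P$-hard (already ignoring the catalytic tape), and $\SigmaCLi \subseteq \CL$ for such $k$ would imply $\P \subseteq \CL$, an open problem the paper does not (and cannot) invoke. \Cref{cor:hierarchy-cspace} is a fixed-$k$ statement and does not rescue this.

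The intended argument is simpler precisely because it does \emph{not} route through $\CL$. The target is $\ZPP$ (resp.\ $\BPP$), which is a polynomial-\emph{time} class, so once the relevant configuration graph is small you may solve the decision problem on it by any polynomial-time algorithm, and $\P$-hardness of the oracle problem is harmless. Concretely: sample $\tau$ uniformly, and use Fact~\ref{prop:disjointDFS} and Lemma~\ref{prop:sizeexpectation} to argue that with probability at least $1/2$ both $\cGMz(\accept)$ and $\cGMz(\reject)$ have size $\poly(n)$; in that event construct the poly-size configuration graph explicitly (as in the compute case of \Cref{lemma:compressorcompute}), and evaluate the $k$-alternation game on this DAG bottom-up in polynomial time, outputting its value; otherwise output ``don't know.'' For $\Sigma/\Pi$ this is zero-error, giving $\ZPP$; for $\MA/\AM$ the probabilistic quantifiers leave bounded two-sided error, giving $\BPP$, as you correctly note. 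Your trajectory-disjointness/averaging reasoning in the $\CL \subseteq \ZPP$ paragraph is correct and is exactly the right tool, but it should be deployed at the level of the $0$-trees and the compute case of the compress-or-compute lemma, not on a downstream $\CL$ machine obtained after a collapse that does not hold for large $k$.
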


\paragraph{Catalytic and non-catalytic space.}
While results in catalytic space have been more forthcoming than
their classical space counterparts in recent years,
it is unclear whether proving connections between ordinary
space classes is \textit{formally} any harder (or easier) than
proving connections between the corresponding catalytic space classes.

A corollary of our reduction is that
ordinary space and catalytic space now share the same fate with regards to
the power of additional resources:
\begin{corollary}\label{cor:space-cspace}
    Let $\mathcal{B}_1,\mathcal{B}_2 \in \{\perp,\text{\upshape\textsf{N}},\text{\upshape\textsf{coN}},\text{\upshape\textsf{U}},\text{\upshape\textsf{BP},\text{\upshape\textsf{Pr}}}\}$.
    Then
    \begin{align*}
    \mathcal{B}_1\SPACE[O(s)] & \subseteq \mathcal{B}_2\SPACE[O(s)] \quad \text{iff} \\
    \forall c \geq s, \quad \mathcal{B}_1\CSPACE[O(s),O(c)] & \subseteq \mathcal{B}_2\CSPACE[O(s),O(c)]
    \end{align*}
    where $\perp$ indicates no additional resources.
\end{corollary}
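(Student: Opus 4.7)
The claim is a biconditional whose two directions require very different effort, so I will handle them separately. The reverse direction is an immediate simulation; the forward direction invokes \Cref{thm:main-tech}.

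For the reverse direction, the plan is to specialize the catalytic hypothesis to $c = s$, where the catalytic and non-catalytic worlds agree up to constants. Concretely, $\mathcal{B}\SPACE[O(s)] \subseteq \mathcal{B}\CSPACE[O(s), O(s)]$ is trivial (a standard space-bounded machine may simply ignore the catalytic tape), while the reverse inclusion follows because a catalytic machine must be correct on \emph{every} initialization $\tau$, so a simulating $\SPACE[O(s)]$ machine may allocate $O(s)$ fresh work cells and fix $\tau := 0^{O(s)}$. Chaining gives
$$\mathcal{B}_1 \SPACE[O(s)] \subseteq \mathcal{B}_1 \CSPACE[O(s), O(s)] \subseteq \mathcal{B}_2 \CSPACE[O(s), O(s)] \subseteq \mathcal{B}_2 \SPACE[O(s)].$$

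For the forward direction, I will assume $\mathcal{B}_1 \SPACE[O(s)] \subseteq \mathcal{B}_2 \SPACE[O(s)]$ and feed it through \Cref{thm:main-tech}, which yields
$$\mathcal{B}_1 \CSPACE[s,c] \subseteq \CSPACE[O(s), O(c)]^{\mathcal{B}_1 \SPACE[O(s)]}.$$
I will then replace the oracle via the hypothesis to obtain $\CSPACE[O(s), O(c)]^{\mathcal{B}_2 \SPACE[O(s)]}$, and finally inline the resulting $\mathcal{B}_2$-oracle into the base catalytic machine. Inlining is possible because the oracle runs in $O(s)$ space on queries of length $2^{O(s)}$, i.e.\ logarithmically in the query length, so its state fits in the base machine's free space and each requested query-tape bit can be regenerated on demand by a catalytic transducer sub-invocation of the base machine.

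\textbf{Main obstacle.} The delicate step is the inlining: the catalytic tape must be preserved across many nested re-executions of the base transducer used to serve the oracle's query-tape reads, and simultaneously across the oracle's own computation. This is precisely what the remark immediately after \Cref{thm:main-tech} promises---the $\mathcal{B}\SPACE$ oracle can be replaced by a $\mathcal{B}\CSPACE$ machine with matching parameters, the base machine restores the tape after each emitted query bit, and the oracle itself never touches the catalytic tape directly. Two edge cases close the argument: $\mathcal{B}_1 = \perp$ is trivial (the conclusion is monotone in $\mathcal{B}_2$), and $\mathcal{B}_1 = \textsf{U}$, which is not literally among the values of $\mathcal{B}$ in \Cref{thm:main-tech}, follows because the reduction there is deterministic apart from the final oracle call, so the same proof goes through verbatim with an unambiguous oracle.
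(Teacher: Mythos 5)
Your proposal is correct and takes the same route the paper implicitly intends (the paper does not spell the proof out). The reverse direction by specializing to $c = O(s)$ and noting that a catalytic machine must work for every $\tau$ (so one may hard-code $\tau = 0^{O(s)}$ into ordinary work space) is exactly right, and the forward direction via \Cref{thm:main-tech} followed by oracle substitution and inlining is the argument the paper is alluding to; the observation that the inlining is sound because the query is computed deterministically, the oracle uses space logarithmic in the query length, and the post-oracle decompression is independent of the oracle's answer, is precisely why the composition lands in $\mathcal{B}_2\CSPACE[O(s),O(c)]$. Your treatment of the edge cases $\mathcal{B}_1 = \perp$ (both sides of the iff hold vacuously) and $\mathcal{B}_1 = \textsf{U}$ (the paper's own remark that the reduction extends to $\CUSPACE$) is also what the authors have in mind. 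One point you pass over, which the paper also leaves implicit: for $\mathcal{B}_1 \in \{\textsf{BP},\textsf{Pr}\}$, \Cref{thm:main-tech} delivers a \emph{promise}-class oracle, whereas the corollary's hypothesis $\mathcal{B}_1\SPACE[O(s)] \subseteq \mathcal{B}_2\SPACE[O(s)]$ is stated for the decision class; substituting the oracle therefore technically needs the promise version of the hypothesis (or an argument that the specific promise problem queried reduces to a total $\mathcal{B}_1\SPACE$ language). For $\textsf{N}$, $\textsf{coN}$, and $\textsf{U}$ there is no such issue, and these cover the consequence the paper actually emphasizes ($\NL = \UL$ iff the catalytic analogue).
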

\noindent
We make a note about unambiguity here.
The proof of \Cref{thm:main-tech} extends to $\CUSPACE$ as well, but
as this does not prove any new results we did not include it in our statement;
however, it does allow us to get the relevant extension in \Cref{cor:space-cspace}.
This gives the consequence that $\NL=\UL$ holds iff
$\CNSPACE[s,c]\subseteq \CUSPACE[O(s),O(c)]$ for every $s$ and $c$.

\paragraph{Lossy catalytic space.}
We lastly note one use of our result in the context of another catalytic model,
namely \textit{lossy} catalytic space~\cite{GuptaJainSharmaTewari24,FolkertsmaMertzSpeelmanTupker25}.
Folkertsma et al.~\cite{FolkertsmaMertzSpeelmanTupker25} showed that
allowing errors when resetting the catalytic tape of a non-deterministic
or randomized catalytic machine
is equivalent to giving the machine extra free space instead; unfortunately
they could not take the further step of equating these error-free classes
to deterministic ones, as 1) for non-determinism no such connections were known,
and 2) the results of \cite{CookLiMertzPyne24} could no longer
be applied to derandomize after adding this extra space.

Neither of our main theorems, i.e. Theorems \ref{cor:collapse-cnl} and
\ref{cor:collapse-cpl}, are robust enough to draw such a connection.
However, Theorems \ref{cor:cat-savitch}, \ref{cor:cat-derand}, and \ref{cor:cat-derand-pl}
\textit{are} robust
to $s \gg \omega(\log c)$, and so we immediately get the following
results (for definitions and motivation see \cite{FolkertsmaMertzSpeelmanTupker25}):
\begin{corollary} \label{cor:lossy}
    For all $s := s(n)$, $c := c(n)$, $e := e(n)$ such that $\log n \leq s \leq c \leq 2^s$
    and $e \leq \sqrt{c}$, and for all $\alpha \in [0,1/2]$,
    \begin{align*}
    \LCNSPACE[s,c,e] & \subseteq \CSPACE[O((s + e \log c)^2),O(c)] \\
    \LCBPSPACE[s,c,e] & \subseteq \CSPACE[O((s + e \log c)^{1+\alpha}),O(c+(s+e \log c)^{2-\alpha})] \\
    \LCPSPACE[s,c,e] & \subseteq \CSPACE[O((s + e \log c)^2),O(c)]
    \end{align*}
\end{corollary}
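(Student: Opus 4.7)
The plan is to compose two results: the ``error absorption'' result of Folkertsma et al.~\cite{FolkertsmaMertzSpeelmanTupker25}, which trades lossiness for extra free space, with the new catalytic Savitch/derandomization theorems (\Cref{cor:cat-savitch}, \Cref{cor:cat-derand}, \Cref{cor:cat-derand-pl}) established earlier in this paper.

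More concretely, the first step is to invoke \cite{FolkertsmaMertzSpeelmanTupker25}, which shows that for $\mathcal{B} \in \{\text{\upshape\textsf{N}},\text{\upshape\textsf{BP}},\text{\upshape\textsf{Pr}}\}$, a lossy catalytic machine with free space $s$, catalytic space $c$, and error budget $e$ can be simulated by an error-free catalytic machine of the same resource type with free space $s' := O(s + e \log c)$ and catalytic space $O(c)$. This gives
\[
\LC\mathcal{B}\mathsf{SPACE}[s,c,e] \subseteq \mathrm{C}\mathcal{B}\mathsf{SPACE}[s', O(c)].
\]
Before proceeding, I would check the parameter regime: we need $\log n \le s' \le c \le 2^{s'}$ so that the subsequent theorems apply. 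The lower bound $s' \ge \log n$ is immediate from $s \ge \log n$, the upper bound $s' \le c$ follows from the hypotheses $s \le c$ and $e \le \sqrt{c}$ (so $e \log c \le \sqrt{c}\log c \le c$ for $c$ large enough), and $c \le 2^{s'}$ holds since $c \le 2^s \le 2^{s'}$.

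The second step applies the appropriate catalytic simulation theorem to the right-hand side. For $\mathcal{B} = \text{\upshape\textsf{N}}$, \Cref{cor:cat-savitch} yields $\mathrm{CNSPACE}[s', O(c)] \subseteq \mathrm{CSPACE}[O((s')^2), O(c)]$; for $\mathcal{B} = \text{\upshape\textsf{BP}}$, \Cref{cor:cat-derand} (instantiated with parameter $\alpha$) yields $\mathrm{CBPSPACE}[s', O(c)] \subseteq \mathrm{CSPACE}[O((s')^{1+\alpha}), O(c + (s')^{2-\alpha})]$; for $\mathcal{B} = \text{\upshape\textsf{Pr}}$, \Cref{cor:cat-derand-pl} yields $\mathrm{CPSPACE}[s', O(c)] \subseteq \mathrm{CSPACE}[O((s')^2), O(c)]$. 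Substituting $s' = O(s + e\log c)$ gives exactly the three inclusions claimed in the corollary.

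There is essentially no obstacle here beyond verifying parameter hypotheses: the content of the corollary is that the two lines of work compose cleanly. The role of the assumption $e \le \sqrt{c}$ is precisely to guarantee $s' \le c$ so that the new theorems apply; if one only wanted the weaker conclusion with $c$ replaced by $\max(c, s')$, the hypothesis could be relaxed. The only other thing worth flagging is that \cite{FolkertsmaMertzSpeelmanTupker25} does not itself address $\mathcal{B} = \text{\upshape\textsf{Pr}}$ in a stated corollary; however, its error-absorption argument is oblivious to the semantics of accept/reject and extends verbatim to unbounded-error randomness, which is what justifies the third line.
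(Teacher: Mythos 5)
Your proof is correct and follows exactly the route the paper intends: compose the error-absorption simulation of Folkertsma et al.\ (lossiness traded for $O(e \log c)$ extra free space) with Theorems~\ref{cor:cat-savitch}, \ref{cor:cat-derand}, and \ref{cor:cat-derand-pl}, which tolerate $s \gg \omega(\log c)$. Your explicit verification of the $\log n \le s' \le c \le 2^{s'}$ hypotheses and the remark about $\mathcal{B} = \text{\upshape\textsf{Pr}}$ are reasonable refinements of what the paper leaves implicit.
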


\section{Proof of Main Result}

\newcommand{\rot}{{\textsc{Rot}}}
\newcommand{\nextstep}{{\textsc{Next}}}
\newcommand{\stepback}{{\textsc{StepBack}}}
\newcommand{\walk}{{\textsc{Walk}}}
\newcommand{\countstepsback}{{\textsc{CountStepsBack}}}
\newcommand{\countsize}{{\textsc{Size}}}

\newcommand{\walkback}{{\textsc{WalkBack}}}
\newcommand{\confbit}{{\textsc{ConfBit}}}
\newcommand{\canon}{{\textsc{Canon}}}
\newcommand{\countunique}{{\textsc{CountUnique}}}
\newcommand{\findtarget}{{\textsc{FindTarget}}}
\newcommand{\badhashindex}{{\textsc{BadHashIndex}}}
\newcommand{\badhashdescription}{{\textsc{BadHashDescription}}}
\newcommand{\comporcomp}{{\textsc{ComputeOrCompress}}}
\newcommand{\decompress}{{\textsc{Decompress}}}

\newcommand{\acc}{{\textsc{Acc}}}
\newcommand{\rej}{{\textsc{Rej}}}
\newcommand{\tcompr}{{\textsc{ComprTarget}}}
\newcommand{\hcomprDFS}{{\textsc{ComprHashDFS}}}
\newcommand{\hcomprGS}{{\textsc{ComprHashGS}}}

In this section we prove our central technical theorem.
We begin with a discussion of the structure of catalytic machines, followed
by an overview of our approach, and lastly we fill in the details to formally
prove \Cref{thm:main-tech}.

\subsection{Configuration Graphs of Catalytic Machines}
Let $[n] = \{0,1,\cdots,n-1\}$.
For a graph $\cG$, we denote its vertex set by $V(\cG)$.
We use $x\cdot y$ to represent the concatenation of strings $x$ and $y$.

Let $\cM$ be a valid catalytic machine computing $f$.
We will assume without loss of generality that all auxiliary information
about the current configuration of $\cM$, i.e. the state of $\cM$'s internal DFA,
the current positions of tape heads for the input, work, and catalytic tapes
are all automatically recorded in a designated part of the worktape.\footnote{Altogether this additional information
technically requires additional space $\log n + \log s + \log c + O(1)\le 3s$, 
we can handle this by replacing $s$ with $4s$ throughout the proofs,
which we omit for clarity.}
(The contents of the output/oracle tape and its head position is not considered to be a part of a machine configuration.)

\begin{definition}
    Let $\cM$ be a catalytic machine with work space $s$ and catalytic space $c$.
    We denote by $\conf{\pi}{u}$ the configuration of $\cM$ where $\pi \in \{0,1\}^c$ is contained on the catalytic tape and $u \in \{0,1\}^s$ is on its work tape.     
\end{definition}

Consider the execution of $\cM$ on some fixed input $x$ and initial catalytic tape
contents $\tau$. Each configuration of $\cM$ can be uniquely represented by $\conf{\pi}{u}$
for some $\pi \in \{0,1\}^c$ and $u \in \{0,1\}^s$.
Without loss of generality we define
$\start := \conf{\tau}{0^s}$ to be the start configuration and $\accept := \conf{\tau}{1 \cdot 1 \cdot 0^{s-2}}$  
to be the unique accepting halt configuration, and $\reject := \conf{\tau}{1 \cdot 0 \cdot 0^{s-2}}$ to be the unique rejecting halt configuration.

It will often be useful to talk about the configuration graph defined
by such executions.

\begin{definition}[Configuration graphs]\label{definition_config_graphs}
    The \emphdef{configuration graph} $\cGM$ is the directed acyclic graph
    where each node corresponds to a configuration of $\cM$ on input $x$, where there is a directed edge from $\conf{\pi}{u}$ to $\conf{\pi'}{u'}$ iff $\conf{\pi'}{u'}$
    can be reached from $\conf{\pi}{u}$ in one execution step of $\cM$. The out-degree of every vertex in $\cGM$ is at most 2, and there is some fixed constant $d_\cM$ depending only on $\cM$ such that each vertex in $\cGM$ has in-degree at most $d_\cM-2$.
    We call the outgoing edges \emphdef{forward} edges of $\ctv$.
    The remaining edges of $\conf{\tau}{v}$ are \emphdef{backward} edges.
    A halting configuration has no forward edges in $\cGM$.
    
    We say an edge $(v,v')$ is labeled with $b\in \zo$ if it corresponds to a non-deterministic/randomized $b$-choice of the machine. For deterministic transitions we label the edge with both $0$ and $1$. 

    For every catalytic tape $\tau$ let $\cGMt$ be the subgraph of $\cGM$ induced on configurations of $\cGM$ that are reachable from $\start$. Clearly $\cGMt$ has one source node, namely $\start$, and up to two sink nodes,
    namely $\accept$ and $\reject$. 
\end{definition}

\noindent
Our main method of exploring non-deterministic graphs will be to simply set
our non-deterministic sequence to the all-zeroes string:
\begin{definition}[$0$-graph of configuration graphs]
    Given a configuration graph $\cGM$, we let the \emphdef{$0$-graph $\cGMz$} be the undirected graph where only edges with label $0$ are retained, and we forget the direction of each edge. Observe that for every $\tau$ and $v\in \cGMt$,  $\cGMz(v)$ is a tree as each node has at most one forward edge corresponding to the edge labeled 0 in $\cGM$. 
    Given a configuration $v$, we let $\cGMz(v)$ be the connected component of $\cGMz$ containing $v$.
\end{definition}

\noindent
The following fact is immediate:
\begin{fact}
    Let $v, v'$ be such that $v \in \cGMz(v')$. Then $\cGMz(v) = \cGMz(v')$ and
    hence $v' \in \cGMz(v)$.
\end{fact}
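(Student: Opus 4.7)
The plan is to observe that this statement is essentially the standard fact that ``being in the same connected component'' is an equivalence relation on the vertex set of an undirected graph, and to unfold the relevant definitions to verify it formally. Since $\cGMz$ is explicitly defined as an \emph{undirected} graph (we forget edge directions), the usual textbook argument applies without modification; the only thing worth being careful about is that $\cGMz(v)$ is defined as ``the connected component of $\cGMz$ containing $v$,'' so we should check that this definition is consistent with the hypothesis $v\in \cGMz(v')$.

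First I would spell out that $v\in \cGMz(v')$ means, by definition, that $v$ lies in the connected component of $\cGMz$ that contains $v'$, i.e.\ that there is an undirected path $v'=w_0,w_1,\dots,w_k=v$ in $\cGMz$. Since every edge of $\cGMz$ is undirected, the reverse sequence $w_k,w_{k-1},\dots,w_0$ is also a valid path, witnessing $v'\in \cGMz(v)$. This gives the ``hence'' part of the statement immediately once the first equality is established.

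Next I would argue the equality $\cGMz(v)=\cGMz(v')$ by showing each containment. For any $u\in \cGMz(v)$ there is a path from $v$ to $u$; concatenating with a path from $v'$ to $v$ (which exists by the hypothesis) yields a path from $v'$ to $u$, so $u\in \cGMz(v')$. The reverse containment is symmetric, using that we have already argued the existence of a path from $v$ to $v'$. This closes both containments and therefore the equality.

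I do not anticipate any real obstacle: the statement is a pure graph-theoretic triviality relying only on the fact that path-connectedness in an undirected graph is an equivalence relation. The only conceivable subtlety would be if $\cGMz(v)$ were defined in some asymmetric or direction-sensitive way, but the preceding definition explicitly passes to an undirected graph before taking the component, so no such issue arises and the ``fact'' label is fully justified.
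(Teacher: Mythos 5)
Your argument is correct and is exactly the standard unfolding of the definition of connected components in an undirected graph; the paper itself offers no proof, labeling the statement as an immediate fact, and your elaboration is the obvious formalization of that. No gap, no divergence.
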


\subsection{Proof Overview}
\label{sec:overview}

Given a machine $\cM$, input $x$, and starting catalytic tape $\tau$,
our focus will be on the 0-graphs reachable from our unique halting states
$\accept,\reject$.
These graphs together include all states reachable from the start configuration:
\begin{equation}\label{eq:partition}
    V(\cGMt) \subseteq V(\cGMz(\accept))\cup V(\cGMz(\reject)).
\end{equation}
This follows as for every configuration $v$ that can be reached from running forward from $\start$, it must be the case that running forward from $v$ on the all-$0$s auxiliary input reaches $\accept$ or $\reject$ (as otherwise the machine would not be valid). Moreover, we can deterministically and reversibly explore $\cGMz(\accept)$ and $\cGMz(\reject)$, since both subgraphs are trees and the roots, which are halting configurations, can be identified by examining the worktape contents. If the total size of $\cGMz(\accept) \cup \cGMz(\reject)$ is bounded by $2^{O(s)}$, then we could use our $\BSPACE$ oracle to solve our function on this graph by identifying each node in the union with its index in the exploration of the two graphs.

Unfortunately, if both graphs are large, this exploration may produce a graph which is too large even to write to the query tape. 
To avoid this issue, we adopt an idea of Cook et al.~\cite{CookLiMertzPyne24} to design what they
dubbed a \textit{compress-or-compute} argument.
We observe that the average component size is bounded:
\begin{equation}\label{eq:zdfsSize}
    \E_{\tau}[|V(\cGMz(\accept))\cup V(\cGMz(\reject))|] \le 2^s.
\end{equation}
Similar average-case bounds~\cite{BuhrmanCleveKouckyLoffSpeelman14,DattaGuptaJainSharmaTewari20,CookLiMertzPyne24} have been observed before, but with an important difference -- they all focused on bounding the average size of the forward-reachable graph $\cGMt$. 

The strategy of~\cite{CookLiMertzPyne24} for a deterministic $\CL$ machine worked as follows. We pad the catalytic tape $\tau$ with $s+1$ bits, producing a new tape $(\tau,i)$, and interpret $i\in [2^{s+1}]$ as a counter. They then explore $\cGMz(\start)$. If this exploration reaches a halt configuration within $2^{s+1}= \poly(n)$ steps, we successfully decide the language, and revert $\tau$. Otherwise, let $\cpu$ be the $i$-th configuration encountered on this traversal. Note that $|\pi,u|=|\tau,i|-1$, and hence we can compress our initial catalytic tape by $1$ bit by storing $(\pi,u)$. To revert the tape, we run the machine backwards from $\cpu$ until we encounter the start state (which reverts $\tau$), and count the number of steps to reach it (which is $i$).

However, for a randomized catalytic machine traversing $\cGMz$ from the start state may not explore all vertices required to decide the language, even if $\cGMt$ is small, as it ignores all $1$ transitions. To prove $\CBPL=\CL$, Cook et al.~\cite{CookLiMertzPyne24} built a much more complicated argument based on taking random walks from the start state, where these walks were themselves generated by a reconstructive PRG~\cite{NisanWigderson94,DoronPyneTell24}. This added substantial complexity and limited their findings to the case where \( c = \text{poly}(2^s) \). Furthermore, their results could not accommodate nondeterministic catalytic machines, as missing even a single state in the random walks could make it impossible to decide the language.

We observe that all of this complexity can be completely removed if we instead reversibly traverse from the \emph{halt} states. For a tape $\tau$, we explore $\cGMz(\accept)$ and $\cGMz(\reject)$. If both graphs are of size at most $2^{s+1}$, we can construct the union graph and determine connectivity in $\cGMt$ (or whatever else is required to decide the language) via our oracle. Otherwise, suppose WLOG that $|V(\cGMz(\accept))|\ge 2^{s+1}$. Then we adopt the strategy of~\cite{CookLiMertzPyne24}.  We pad our tape to $(\tau,i)$ and let $\cpu$ be the $i$-th configuration in this traversal. We compress
\[
(\tau,i) \ra (\pi,u,0)
\]
and revert essentially as in~\cite{CookLiMertzPyne24}, by traversing backwards and counting steps until we reach a halting configuration.

An iterative application of this idea either decides the language or frees $c$ bits on the catalytic tape. If the latter occurs, we can brute force over tapes $\tau'$ until we find one for which this graph is small---such a $\tau'$ must exist by~\Cref{eq:zdfsSize}---which we can then provide to the oracle. The only difference between models is that the oracle is solving a different problem on the configuration graph.

\subsection{Formal Proof}
We now formalize our discussion from \Cref{sec:overview}.
Fix a valid catalytic machine $\cM$ using catalytic space $c$ and work space $s$
computing a language $L\in \CBSPACE[s,c]$, and fix an $n$-bit input $x$.
The following fact is immediate:
\begin{fact}
    Each tree in $\cGMz$ contains at most one halting configuration.
\end{fact}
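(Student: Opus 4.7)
The plan is to exhibit, for each connected component of $\cGMz$, a single halting configuration that every vertex in the component ``funnels into'' along forward $0$-edges, from which uniqueness is immediate. The key structural observation is that in $\cGM$, every non-halting configuration has exactly one outgoing $0$-labeled edge: if the transition at that configuration is deterministic, its unique forward edge is labeled by both $0$ and $1$; if it is non-deterministic/randomized, one of the two forward edges is labeled $0$. Halting configurations, by definition, have no outgoing edges at all.

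Using this, I would define a map $h \colon V(\cGMz) \to V(\cGMz)$ sending a vertex $v$ to the configuration obtained by iteratively following the forward $0$-edge out of $v$ until no such edge exists. This iteration is well-defined and terminates: since $\cGM$ is a DAG, no cycle can be formed by forward $0$-edges, and the process can only halt upon reaching a vertex without an outgoing $0$-edge, i.e.\ a halting configuration. Thus $h(v)$ is always a halting configuration, and for any halting $v$ we have $h(v) = v$.

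Next I would argue that $h$ is constant on each tree of $\cGMz$. It suffices to check this across a single undirected edge, since trees are connected. If $\{u, u'\}$ is an edge in $\cGMz$, then (WLOG) $u'$ is the forward $0$-successor of $u$ in $\cGM$, so iterating the forward-$0$-step from $u$ passes through $u'$ before reaching any halting vertex. Hence $h(u) = h(u')$, and by induction $h$ is constant on the entire connected component containing $u$.

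The conclusion is then immediate: if a tree in $\cGMz$ contained two halting configurations $h_1 \neq h_2$, then $h_1 = h(h_1) = h(h_2) = h_2$, a contradiction. I do not anticipate any real obstacle; the only subtle point is verifying that a non-halting configuration always has an outgoing $0$-edge (ruling out the case where the iteration of $h$ terminates at a non-halting vertex), which is immediate from the edge-labeling convention in \Cref{definition_config_graphs}.
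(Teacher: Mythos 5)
Your proof is correct. The paper states this fact without a written proof (it calls it ``immediate''), and your argument---define $h(v)$ by following the unique forward $0$-edge out of each non-halting configuration until a halting configuration is reached, observe that this map is constant across every edge of $\cGMz$ and hence on every connected component, and note that it fixes halting configurations---is a clean way to make the claim precise; well-definedness of $h$ uses exactly the two facts you flag (finiteness and acyclicity of $\cGM$, plus every non-halting configuration having a forward $0$-edge). You actually establish the slightly stronger claim that each tree contains \emph{exactly} one halting configuration. A one-line alternative that avoids defining $h$: a tree on $n$ vertices has $n-1$ edges, each edge of $\cGMz$ is the unique outgoing $0$-edge of some non-halting vertex, and that source vertex lies in the same component, so the tree contains exactly $n-1$ non-halting configurations and hence exactly one halting configuration. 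Both routes rest on the same two ingredients, so this is a matter of presentation rather than substance.
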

\noindent
It follows, then, that the trees in $\cGMz$ are pairwise disjoint:

\begin{fact}\label{prop:disjointDFS}
    Let $\tau\neq \tau' \in \{0,1\}^c$ be two distinct contents of the catalytic tape of a catalytic machine $\cM$ on an input $x$.
    The vertex sets of $\cGMz(\accept)$, $\cGMz(\acceptp)$, $\cGMz(\reject)$ and $\cGMz(\rejectp)$
    are pairwise disjoint.
\end{fact}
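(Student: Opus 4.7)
The plan is to derive this directly from the preceding fact that each tree of $\cGMz$ contains at most one halting configuration, together with the observation that the four named configurations are four pairwise distinct halting configurations of $\cM$.

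First, I would verify the four configurations are indeed distinct vertices of $\cGM$. By definition, $\accept = \conf{\tau}{1\cdot 1\cdot 0^{s-2}}$, $\reject = \conf{\tau}{1\cdot 0\cdot 0^{s-2}}$, and analogously $\acceptp, \rejectp$ over the tape $\tau'$. The second worktape bit distinguishes the ``accept'' configurations from the ``reject'' configurations, while the hypothesis $\tau \neq \tau'$ distinguishes the primed configurations from the unprimed ones. Hence $\accept, \acceptp, \reject, \rejectp$ are four syntactically distinct halting configurations of $\cM$ on input $x$.

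Next, I would recall that $\cGMz$ is a forest: every vertex has at most one outgoing $0$-labeled edge (the unique forward transition under auxiliary bit $0$), so in the undirected graph each connected component is a tree rooted at the unique vertex with no forward $0$-edge, i.e.\ at the halting configuration of that component (if one exists). By the preceding fact, each such tree contains at most one halting configuration. Since the four configurations above are four different halting configurations, they must lie in four different trees of $\cGMz$.

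Finally, because $\cGMz(v)$ is defined to be the connected component of $\cGMz$ containing $v$, two configurations lying in different trees of the forest have completely disjoint components. Applying this to each of the six pairs drawn from $\{\accept, \acceptp, \reject, \rejectp\}$ gives the claimed pairwise disjointness. There is no real obstacle here; the entire content of the statement is absorbed by the one-halt-per-tree lemma, and the only thing to be careful about is verifying that the four halting configurations are genuinely distinct, which follows immediately from $\tau \neq \tau'$ and the differing worktape contents.
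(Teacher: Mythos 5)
Your proof is correct and takes exactly the approach the paper intends: the paper states this fact immediately after ``each tree in $\cGMz$ contains at most one halting configuration'' and asserts it ``follows'' from that, which is precisely what you spell out. The only substantive content is verifying the four named configurations are distinct halting configurations, which you correctly argue from $\tau \neq \tau'$ and the differing second worktape bit.
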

\noindent
An immediate consequence of the above is that the average size of
these components is bounded:
\begin{lemma}\label{prop:sizeexpectation}
    Let $\cM$ be a catalytic machine with work space $s := s(n)$ and
    catalytic space $c := c(n)$, where $\log n \le s \le c \le   2^s$. Then
    $$
        \Exp_{\tau\in \{0,1\}^c}[ |V(\cGMz(\accept))|]\le 2^s \text{ and } \Exp_{\tau\in \{0,1\}^c}[ |V(\cGMz(\reject))|] \le 2^s.
    $$
\end{lemma}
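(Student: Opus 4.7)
The plan is to derive both expectation bounds as immediate corollaries of the pairwise disjointness recorded in \Cref{prop:disjointDFS}, combined with a simple averaging argument over the total configuration space of $\cM$.

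First, I would observe that on the fixed input $x$ the configurations of $\cM$ are exactly pairs $\conf{\pi}{u}$ with $\pi \in \{0,1\}^c$ and $u \in \{0,1\}^s$, so the universe of configurations has size exactly $2^{c+s}$. Every tree $\cGMz(\accept)$ or $\cGMz(\reject)$ is a subset of this universe, and the notation $\accept = \conf{\tau}{1\cdot 1 \cdot 0^{s-2}}$ makes explicit that these sets depend on $\tau$ only through which tape is loaded.

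Next, I would invoke \Cref{prop:disjointDFS}: as $\tau$ ranges over $\{0,1\}^c$, the vertex sets $V(\cGMz(\accept))$ are pairwise disjoint, and likewise the vertex sets $V(\cGMz(\reject))$ are pairwise disjoint. Summing over all tapes and using that these families live inside a universe of size $2^{c+s}$ yields
$$\sum_{\tau \in \{0,1\}^c} |V(\cGMz(\accept))| \;\le\; 2^{c+s} \quad\text{and}\quad \sum_{\tau \in \{0,1\}^c} |V(\cGMz(\reject))| \;\le\; 2^{c+s}.$$
Dividing each sum by $2^c$ produces exactly the two claimed expectation bounds.

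There is essentially no obstacle: the lemma is a one-line averaging consequence of the disjointness established in \Cref{prop:disjointDFS}. The hypothesis $\log n \le s \le c \le 2^s$ is only used to ensure that the tape space is nontrivial and the configuration count $2^{c+s}$ is the correct universe size; the content of the argument is purely combinatorial and does not interact with the semantics of $\cM$ beyond the already-observed disjointness.
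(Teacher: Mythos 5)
Your proof is correct and takes exactly the approach the paper intends: the paper states the lemma as an ``immediate consequence'' of \Cref{prop:disjointDFS}, and your averaging argument (disjoint subsets of a universe of size $2^{c+s}$, summed over $2^c$ tapes, then divided by $2^c$) is precisely the intended one-liner, just written out in full.
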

\noindent
Finally, the forward-reachable graph from every state is contained in the trees rooted at the two possible halt states:
\begin{lemma}\label{lem:reachInDFS}
    Let \(\mathcal{M}\) be a valid catalytic machine, and let \(\cpu\) be an arbitrary node in \(\cGMt\). Then, $\cpu \in V(\cGMz(\accept)) \cup V(\cGMz(\reject))$.
\end{lemma}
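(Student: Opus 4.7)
The plan is to leverage the validity of $\cM$ to argue that starting from any reachable configuration and running with the all-zeros non-deterministic sequence must terminate in a halt state, which by assumption is either $\accept$ or $\reject$.

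First I would fix $\conf{\pi}{u}\in V(\cGMt)$. By definition of $\cGMt$, there exists some sequence of non-deterministic choices $b_1,b_2,\ldots,b_k \in \{0,1\}^k$ such that running $\cM$ forward from $\start$ under these choices arrives at $\conf{\pi}{u}$ after $k$ steps. Then I would extend this to the infinite sequence $b_1,\ldots,b_k,0,0,0,\ldots$ and invoke the validity hypothesis: running $\cM$ from $\start$ with this sequence must halt in some finite number of steps $k' \ge k$, in either $\accept$ or $\reject$ (since $\cM$ has precisely these two halting configurations modulo the catalytic contents, and by validity the catalytic tape is restored to $\tau$, so the final configuration is one of $\accept$ or $\reject$).

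Next I would isolate the tail of this trajectory. After $k$ steps the machine is at $\conf{\pi}{u}$, and the remaining $k'-k$ steps use only $0$-choices. Thus running $\cM$ from $\conf{\pi}{u}$ on the all-zeros non-deterministic sequence yields a sequence of configurations $\conf{\pi}{u}=v_0,v_1,\ldots,v_{k'-k}=w$, where each transition $v_i \to v_{i+1}$ is a $0$-labeled edge of $\cGM$, and $w \in \{\accept,\reject\}$.

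Finally I would translate this to $\cGMz$: forgetting the direction of the $0$-edges on this trajectory gives an undirected path in $\cGMz$ from $\conf{\pi}{u}$ to $w$. Hence $\conf{\pi}{u}$ lies in the same connected component as $w$ in $\cGMz$, giving $\conf{\pi}{u} \in V(\cGMz(\accept)) \cup V(\cGMz(\reject))$. The argument is largely a bookkeeping exercise; the only conceptual subtlety is that validity must be applied to an arbitrary (in particular, extendable) prefix of choices, which is exactly what the ``regardless of its non-deterministic/random choices'' clause in the definition of validity provides.
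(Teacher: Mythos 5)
Your proof is correct and takes essentially the same approach as the paper: both extend a witnessing non-deterministic sequence for $\cpu$ by an all-zeros suffix, use validity to conclude the run halts at $\accept$ or $\reject$, and observe that the all-zeros tail lies entirely within the $0$-graph component of $\cpu$.
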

\begin{proof}
    Since $\cpu$ is reachable by $\cM$ on $x$ from $\start$, there exists some
    non-deterministic sequence $\sigma$ such that $\cM$ reaches configuration $\cpu$.
    Now consider the non-deterministic sequence $\sigma \cdot 0^*$, which takes us
    to $\cpu$ and subsequently uses 0 as its non-deterministic choices. Since we started
    from $\start$, this must eventually either reach $\accept$ or $\reject$,
    and since the latter part of this walk resides entirely inside $\cGMz(\cpu)$,
    either $\cGMz(\accept)$ or $\cGMz(\reject)$ must contain $\cpu$.
\end{proof}

\paragraph{Exploration of Graphs.}
We formally define the notation related to graphs for the purpose of graph exploration.
For an undirected graph $G=(V,E)$ and $v\in V$, we let $G(v)$ be the component of $v$.
Let $d$ be the maximum degree of $G$.

We assume that there is a cyclic ordering of edges at each vertex that define a rotation map $\rot: V \times [d] \rightarrow V \times [d]$,
such that $\rot(v,i)=(u,j)$ if the $i$-th edge of $v$ is the $j$-th edge of $u$
(if $i\ge \deg(v)$ then $\rot(v,i)=(v,i)$). 
If we identify the $i$-th directed edge leaving $v$ by $(v,i)$, then $\rot(v,i)$ flips the direction of the edge.

We will consider walks on $G$ starting from a given edge $(v,i)$ that follow an Eulerian tour of $G(v)$.
For $v \in V$ and $i \in [\deg(v)]$, the next edge of the walk is given by $\nextstep(v,i)=(u,j+1 \mod \deg(v))$ where $\rot(v,i)=(u,j)$.
A step back is taken by $\stepback(u,j)=\rot(u,j-1 \mod \deg(u))$.

We will index edges at each vertex (configuration) $v$ in $\cGMz$ by $[\deg(v)]$ so that the forward edge gets index 0 (if there is a forward edge).
We fix the rotation map $\rot^0$ of $\cGMz$ arbitrarily otherwise.

\paragraph{Defining Catalytic Subroutines.}
We can define a catalytic subroutine $\rot(\cpu,i)$ which uses space $O(s)$ that,
given $\pi\in \{0,1\}^c$ on the catalytic tape, and $u\in \{0,1\}^s$ and
$i \in [d_\cM]$ on the worktape, replaces $(\cpu,i)$ by $(\cpup,j)=\rot^0(\cpu,i)$.
Clearly, $\nextstep(\cpu,i)$ and $\stepback(\cpup,j)$ can be implemented by
catalytic subroutines working in space $O(s)$.

Let $S=2^{O(s)}$ be a (easily computable) function of $n$.
We can also define a catalytic subroutine $\walk(\ctv,i,t)$ which applies the subroutine $\nextstep(\cdot)$ $t$ times on $(\ctv,i)$ where $t \le S$. 
The procedure uses an additional work space of size $O(s)$, and the input $(\ctv,i)$ is replaced by the output $(\cpu,j)$ on the respective tapes. 

We will call the subroutine $\walk(\ctv,i,t)$ for halting configurations $\ctv$ and $i=0$.
For these calls, we can define an inverse subroutine $\countstepsback(\cpu,j)$ which
calculates $\ell \le S$, the number of times we need to apply $\stepback(\cdot)$ on
$(\cpu,j)$ before reaching $(\ctv,0)$ for some halting configuration $\ctv$.
This subroutine uses extra $O(s)$ work space, it replaces $(\cpu,j)$ by $(\ctv,0)$,
and returns the count to a designated area of the work space.
If the count is bigger than $S$ it returns $\infty$.

Combining $\walk(\ctv,i,t)$ with $\countstepsback(\cpu,j)$ we can create a
catalytic subroutine $\confbit(\ctv, b, t)$ which, given a halting configuration $\ctv$, $b\le c+s$ and $t\le S$,
determines the $b$-th bit of the configuration reached by $\walk(\ctv,0,t)$.
For a halting configuration $\ctv$, $\confbit(\ctv, b, t)$ preserves $\ctv$ and $t$
on its tape when it finishes its computation.
Additionally, we define a subroutine $\canon(\ctv,i,t)$ which preserves $\ctv,i,t$ and returns 1, if the edge  $(\cpu,j)$ reached by $\walk(\ctv,i,t)$ has $j=0$, and it returns 0 otherwise.
We think of $t$ as the \textit{canonical} index of the configuration $\cpu$ within $\cGMz(\ctv)$.

Similarly, we can define a catalytic subroutine $\countsize(\ctv)$, which, for a halting configuration $\ctv$, determines the minimum number of steps $t\ge 1$, such that $\walk(\ctv, 0, t)$ returns back to $(\ctv, 0)$. 
If $t$ is at most $S$, it outputs $t$; otherwise, it outputs $\infty$. 
Since $\cGMz(\ctv)$ forms a tree for a halting configuration $\ctv$, the subroutine returns twice the number of edges of $\cGMz(\ctv)$ iff $2\le |V(\cGMz(\ctv))| \le \frac{S}{2} + 1$.  
It returns $1$ or $\infty$ otherwise.
Additionally, the subroutine uses $O(s)$ extra work space.

Note that all the above procedures should ignore any portions of the machine tapes
not directly referenced therein.

\paragraph{The Main Subroutine.}
Our plan is to use the Compress-or-Compute strategy.
Given a starting catalytic tape $\tau$ for a machine $\cM$,
we will either use $\cGMz(\accept)$ and $\cGMz(\reject)$ to construct a small graph
that determines the outcome of the computation of $\cM$ on $x$,
or we will use vertices in $\cGMz(\accept)$ and $\cGMz(\reject)$ to compress
the catalytic tape. We can state the main Compress-or-Compute lemma:

\begin{lemma}\label{lemma:compressorcompute}
   Let $\cM$ be a catalytic machine with work space $s := s(n)$ and
   catalytic space $c := c(n)$, where $\log n \le s \le c \le  2^s$,
   and let $x \in \{0,1\}^n$ be an input for $\cM$ written on the input tape.
   Let $B={2s}$ and $S=2^B$, and
   let $\tau \in \{0,1\}^c$ and $\tar \in \{0,1\}^B$ be given on the catalytic tape.
   There is a catalytic subroutine $\comporcomp(\tau,\tar)$ which takes one of
   the following two actions:
   \begin{enumerate}
       \item{Compute:}  If both $\cGMz(\accept)$ and $\cGMz(\reject)$ are of size at most $S/2+1$, then it outputs a directed graph $G$ and two vertices $r$ and $t$ such that the forward reachable graph from $r$ is isomorphic to $\cGMt$, with $\start$ mapping to $r$ and $\accept$ mapping to $t$.
       \item{Compress:} Otherwise, it replaces $\tau$ by $\pi \in \{0,1\}^c$ and $\tar$ by $(u,j,0^{s- \log d_\cM})$, where $u\in\{0,1\}^s$ and $j\in[d_\cM]$ have the property that $\countstepsback(\cpu,j)$  replaces $\pi$ by $\tau$ and returns $\tar$ as the number of steps.
   \end{enumerate}
   The subroutine returns a bit indicating which action it took,
   and the procedure leaves other portions of the tapes unchanged.
   Furthermore $\comporcomp(\tau,\tar)$ uses additional space $O(s)$.
\end{lemma}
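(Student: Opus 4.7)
The plan is to split into the two declared cases using $\countsize$ as a case-indicator subroutine. I would call $\countsize(\accept)$ and $\countsize(\reject)$; each uses only $O(s)$ extra work space and preserves the tapes apart from a designated return area. If both return finite values, both trees satisfy $|V(\cGMz(\cdot))| \leq S/2 + 1$ and we invoke the Compute branch; otherwise at least one tree is large and we invoke Compress. In either case I would first cache $\tar$ in the $O(s)$ scratch workspace, so that the Compute branch can restore it when done (this fits because $|\tar| = B = 2s = O(s)$).

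For the Compute branch, I would stream a description of a directed graph $G$ whose vertex set is the union of canonical representatives in $\cGMz(\accept) \cup \cGMz(\reject)$ onto the (write-only, one-way) oracle tape. For each halt $h \in \{\accept, \reject\}$ in turn, I would take at most $\countsize(h) \leq S$ iterated applications of $\nextstep$ starting from $(h, 0)$, thereby simulating a full Eulerian tour of $\cGMz(h)$ back to $(h, 0)$. At each intermediate state $(\cpu, j)$, I would use $\canon$ to detect $j = 0$; in that case $\cpu$ is the canonical representative of its vertex, and I emit $\cpu$ along with its two $\cM$-successors, obtained by simulating one $\cM$-step from $\cpu$ for each non-deterministic choice without mutating the tapes. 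After both tours finish the tapes sit at $(\accept, 0)$, so I restore $\tar$ from scratch and then write $r = \start$ and $t = \accept$ as distinguished vertices. The emitted graph has size $O(S \cdot (s+c)) = 2^{O(s)}$ bits as required. The isomorphism claim follows because $V(\cGMt) \subseteq V(\cGMz(\accept)) \cup V(\cGMz(\reject))$ by~\Cref{lem:reachInDFS}, and any forward $\cM$-transition from a vertex of $\cGMt$ again lies in $\cGMt$, so forward reachability from $r$ inside $G$ reproduces $\cGMt$ exactly.

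For the Compress branch, let $h$ be the first halt whose $\countsize$ returned $\infty$. Then $|V(\cGMz(h))| > S/2 + 1$, so the Eulerian-tour cycle length $2|E(\cGMz(h))|$ exceeds $S$, making the map $\tar \mapsto \walk(h, 0, \tar)$ injective on $[S]$ (the walk never completes a full cycle). I execute $\walk(h, 0, \tar)$, which leaves $(\cpu, j)$ directly on the tapes, so $\pi$ already replaces $\tau$ on the catalytic tape and $(u, j)$ already sits in the former $\tar$ region; I only need to zero out the trailing $s - \log d_\cM$ bits to produce the required format $(u, j, 0^{s - \log d_\cM})$. The decompression property is then immediate: $\countstepsback(\cpu, j)$ walks backward in $\cGMz(h)$ by reversibility, reaching $(h, 0)$ after exactly $\tar$ steps (and in no other tree, by the disjointness guaranteed by~\Cref{prop:disjointDFS}), which restores $\tau$ on the catalytic tape and returns $\tar$.

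The main subtlety I anticipate is the bookkeeping required in the Compute branch: the Eulerian tours overwrite $\tar$ while traversing, and I must produce a coherent graph description on the write-once oracle tape without ever being able to revisit what I wrote. Both issues are handled by the structure above — the $O(s)$ scratch cache covers $\tar$, and the Eulerian-tour enumeration encounters each canonical vertex at a unique step, so the output is naturally streaming. A secondary technical point, the injectivity of $\walk(h, 0, \cdot)$ on $[S]$, reduces directly to the size lower bound that triggered the Compress branch.
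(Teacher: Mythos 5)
Your proposal is correct and follows the paper's compress-or-compute strategy closely: test tree sizes via $\countsize$; if both small, traverse and emit a graph for the oracle; if one is large, $\walk$ for $\tar$ steps to compress, with invertibility via $\countstepsback$ guaranteed because $\countsize$ returned $\infty$ implies the Eulerian cycle length exceeds $S$. The Compute branch is implemented differently but validly: the paper indexes vertices abstractly by $(i,b)\in[S]\times[2]$ and constructs the adjacency relation by bit-comparisons via $\confbit$ and $\canon$, while you stream full configuration strings and their two $\cM$-successors during a single Eulerian tour. Your version produces a smaller description ($2^{O(s)}$ either way) and avoids re-running $\walk$ from scratch for every pair; also, once you maintain the tour yourself you can read $j$ directly off the worktape, so the call to $\canon$ is superfluous.

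Two small imprecisions are worth correcting. First, the worry that ``the Eulerian tours overwrite $\tar$'' is unfounded: $\walk$ and its relatives act only on the simulated configuration $(\pi,u)$ (the first $c$ bits of the catalytic tape plus the simulated $\cM$-worktape region) and by convention ignore the $\tar$ slot entirely; the $O(s)$ cache of $\tar$ is therefore harmless but unnecessary. Second, and more substantively, after $\walk(h,0,\tar)$ the pair $(u,j)$ does \emph{not} ``already sit in the former $\tar$ region'' --- $u$ is the simulated $\cM$-worktape contents, living on the machine's worktape, and $j$ is a worktape counter. The compress step must explicitly copy $(u,j,0^{s-\log d_\cM})$ into the $\tar$ slot of the catalytic tape (overwriting the old $\tar$, which is safe since $\walk$ has already consumed it) and then clear the worktape so that ``other portions of the tapes'' are left unchanged. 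With those bookkeeping fixes your argument matches the paper's.
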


\begin{proof}
Recall that $\cGMt$ is a subgraph of $\cGM$ induced on configurations of $\cGM$ reachable from $\start$, and that
by~\Cref{lem:reachInDFS} we have
\[V(\cGMt) \subseteq V(\cGMz(\accept) \cup V(\cGMz(\reject)).\]

In brief, if both $\cGMz(\accept)$ and $\cGMz(\reject)$ are of size at most $\frac{S}{2}+1$, we can explore them completely using $\walk(\cdot)$, and reconstruct a graph $G$ containing $\cGMt$. 
If  $\cGMz(\accept)$ or $\cGMz(\reject)$ is large we can compress $\tau$.

\medskip\noindent
\textit{Initial check:}
We first check the sizes of $\cGMz(\accept)$ and $\cGMz(\reject)$ using calls to
$\countsize(\accept)$ and $\countsize(\reject)$.
Since $\accept = \conf{\tau}{1 \cdot 1 \cdot 0^{s-2}}$ and
$\reject = \conf{\tau}{1 \cdot 0 \cdot 0^{s-2}}$, both states are easy to prepare given
$\start$, and $\countsize$ can be run in $O(s)$ space.
If either of the sizes exceeds $S/2 + 1$, meaning if either call to $\countsize$ returns $\infty$, we move to the \textit{compress case}; otherwise, we proceed to the \textit{compute case}.

\medskip\noindent
\textit{Compute case:}
If both the graphs have a size of  at most $S/2+1$, we can explore configurations of $\cGMz(\accept)$ and
$\cGMz(\reject)$ using $\confbit(\cdot)$.
We will index the configurations of $G$ by $[S]\times [2]$.
The configuration indexed $(i,b)$ is the configuration reached by $\walk(\accept,0,i)$
if $b=0$ and by $\walk(\reject,0,i)$ otherwise.

For each $(i,b), (j,d) \in [S]\times [2]$, we can check
whether there is an edge from the configuration $(i,b)$ to $(j,d)$ in $\cGM$ by
comparing them bit-by-bit using $\confbit(\cdot)$.
If so and $i$ and $j$ are canonical indexes of their respective configurations (which can be checked by calling $\canon(\cdot)$) we connect them by an edge in $G$.
Hence, we output a graph $G$ on $[S]\times [2]$ where the connectivity between the
canonical indexes of configurations from $\cGMt$ is the same as in $\cGMt$.
By checking each $(i,b) \in [S]\times [2]$, we can locate a canonical copy of a configuration
$\start$ and $\accept$, and output them as $r$ and $t$. 

This computation will use at most $O(s)$ space on the work tape to run $\walk$ and $\confbit$,
and it will preserve $\tau$ and $\tar$ on the catalytic tape.

\medskip\noindent
\textit{Compress case:}
Consider without loss of generality the case where $\countsize(\accept)$ returns $\infty$. 
We prepare $\accept = \conf{\tau}{v}$ where $v=1 \cdot 1 \cdot 0^{s-2}$, and
run $\walk(\accept,i,\tar)$ with $i$ set to 0, treating $\tar$ as a natural number evaluated in base-2, plus one. The result of $\walk$ will be to
replace $\tau$ by some $\pi$, $v$ by some $u\in \{0,1\}^s$, and $i$ by some $j$.
We replace $\tar$ by $(u,j,0^{s-\log d_\cM})$ and end the procedure.

This computation utilizes at most \( O(s) \) workspace, which is all that is needed for the subroutine \(\walk\). Since \(\countsize(\accept)\) returns \(\infty\), it indicates that during the first \( S \) steps prescribed by \(\walk\), we do not return to the edge \((\accept, 0)\). Therefore, given that \(\tar \leq S\) (with \(\tar\) treated as a natural number), calling \(\countstepsback(\cpu, j)\) replaces \(\pi\) with \(\tau\) and returns \(\tar\) as the number of steps taken. Consequently, the output possesses the required properties.
\end{proof}

We now finish the proof of \Cref{thm:main-tech} using the compression and decompression
procedures from above.

\begin{proof}[Proof of \Cref{thm:main-tech}]
Let $\cM$ be our $\CBSPACE[c,s]$ machine and fix an $n$-bit input $x$.
Define $B := 2s$ and $S := 2^B$.

Our goal is to output a directed graph $G$ and two vertices $r$ and $t$ where $t$
is reachable from $r$ in $G$ iff $\cM$ accepts $x$.
The graph $G$ will be obtained by the Compress-or-Compute subroutine of
\Cref{lemma:compressorcompute} which will be run for a suitable choice of $\tau$.
Given such a graph $G$, we can clearly obtain the answer to our function by appealing
to our oracle, as it will be a graph of size at most $2S = 2^{2s+1}$---thus it can be analyzed by a $\BSPACE[O(s)]$ machine---which represents $\cGMt$.

We let $k \ge  2+ 2c/s$, and we think of our catalytic tape as consisting of blocks
    \[
    (\tau,\tar_0,\tar_1,\ldots,\tar_{k-1})
    \]
where $\tau\in \zo^c$ and $\tar_i \in \zo^{B}$. Note that this gives a total catalytic
length of $c + (2+2c/s) \cdot 2s \leq 10c$ as desired.

We iterate over $i\in [k]$ and call $\comporcomp(\tau_i,\tar_i)$, where $\tau_i$
is the first $c$ bits of the catalytic tape at the time when we begin the
$i$-th iteration; thus $\tau_0 := \tau$.
Each call $\comporcomp(\tau_i,\tar_i)$ either outputs the desired graph $G$
or compresses $\tar_i$.
In the former case, we obtain the graph $G$ on which we can run our oracle to obtain
the solution to our original function, at which point we can decompress (see below).
In the latter case, $\tau_i$ is replaced by some $\pi$, which we refer to as $\tau_{i+1}$,
and $\tar_i$ is replaced by some $(u_i,j_i,0^{s- \log d_\cM})$; we then move on
to the $(i+1)$-st iteration.

If none of the calls gives the desired graph, then since we free at least $s/2$ bits of the
catalytic tape during each iteration, we free at least $c+s$ bits of space on
the catalytic tape in total.
We can use this space to iterate over all possible $\tau_k\in \{0,1\}^c$ and set
$\tar_k=1^B$, and see for which one $\comporcomp(\tau_k,\tar_k)$ falls into the compute case.
Whenever it does not do so, i.e. whenever it falls into the compress case,
then it replaces the current $\tau_k$ by some $\pi$ and $\tar_k$ by some
$(u,j,0^{s- \log d_\cM})$; we will revert it back to $\tau_k$ and $\tar_k$ by running
$\countstepsback(\cpu,j)$, which will replace $\pi$ by $\tau_k$ and return $\tar_k$ as
the number of steps.
We increment $\tau_k$ viewed as a binary counter and continue for our new $\tau_k$.

By \Cref{prop:sizeexpectation},  $\Exp_{\tau\in \{0,1\}^c}[ |V(\cGMz(\accept))|]\le 2^s$
and  $\Exp_{\tau\in \{0,1\}^c}[ |V(\cGMz(\reject))|]\le 2^s$.
Thus, for at least half of the possible starting states $\tau$, we have that
$|V(\cGMz(\accept))| \le 4\cdot 2^s$ and  $|V(\cGMz(\accept))| \le 4\cdot 2^s$. 
In particular, there must exist some $\tau\in\{0,1\}^c$ for which both $V(\cGMz(\accept))$
and $V(\cGMz(\reject))$ are smaller than $\frac{S}{2}$, and on this $\tau_k = \tau$
we reach the compute case and output the desired graph $G$.

Recall that once we find a graph $G$ in the compute case, we can appeal to
our oracle to obtain the answer to our function. If we do so via the $\tau_k$
loop above we then erase $(\tau_k,\tar_k)$ on our tape. We are now
left at the state immediately following $\comporcomp(\tau_i,\tar_i)$ for some
$i \in [k]$; our last step is to decompress
each round of $\comporcomp(\tau_i,\tar_i)$, in reverse order, that we
executed until the final call.

To decompress $\pi=\tau_{i+1}$ and $(u_i,j_i,0^{s- \log d_\cM})$,
we call $\countstepsback(\conf{\tau_{i+1}}{u_i},j_i)$ which will replace $\tau_{i+1}$ by
$\tau_i$ and return $\tar_i$ as the number of steps.
Hence we can restore $\tar_i$, and $\tau_i$, and then we move on
to $i-1$. Our final state will once again be the initial catalytic tape
    \[
    (\tau,\tar_0,\tar_1,\ldots,\tar_{k-1})
    \]
at which point we return our saved answer and halt.
\end{proof}


\ifnames
\section*{Acknowledgements} 
We thank Ninad Rajgopal for discussions relating
to the catalytic hierarchies. 
\fi


\DeclareUrlCommand{\Doi}{\urlstyle{sf}}
\renewcommand{\path}[1]{\small\Doi{#1}}
\renewcommand{\url}[1]{\href{#1}{\small\Doi{#1}}}
\bibliographystyle{alphaurl}
\bibliography{bibliography}



\end{document}